\date{}
\title{Using Symbolic Computation to Analyze Zero-Hopf Bifurcations of Polynomial Differential Systems}
\author{Bo Huang\\
	\it\footnotesize LMIB -- School of Mathematical Sciences,
Beihang University, Beijing 100191, China \\
	\it\footnotesize bohuang0407@buaa.edu.cn}
\newtheorem {theorem*}{Theorem}
\newtheorem{theorem} {Theorem}
\newtheorem{lemma}{Lemma}
\newtheorem{remark}{Remark}
\newtheorem{open problem} {Open problem}
\numberwithin{equation}{section}
\begin{document}
\maketitle
\noindent {\bf Abstract.}  This paper is devoted to the study of infinitesimal limit cycles that can bifurcate from zero-Hopf equilibria of differential systems based on the averaging method. We develop an efficient symbolic program using Maple for computing the averaged functions of any order for continuous differential systems in arbitrary dimension. The program allows us to systematically analyze zero-Hopf bifurcations of polynomial differential systems using symbolic computation methods. We show that for the first-order averaging, $\ell\in\{0,1,\ldots,2^{n-3}\}$ limit cycles can bifurcate from the zero-Hopf equilibrium for the general class of perturbed differential systems and up to the second-order averaging, the maximum number of limit cycles can be determined by computing the mixed volume of a polynomial system obtained from the averaged functions. A number of examples are presented to demonstrate the effectiveness of the proposed algorithmic approach.

\smallskip

\noindent {\bf Math Subject Classification (2020).} 34C07; 37G15; 68W30.

\smallskip

\noindent {\bf Keywords.} {Algorithmic approach; averaging method; limit cycles; symbolic computation; zero-Hopf bifurcation}

\section{Introduction}

Many real-world phenomena are modeled using autonomous systems of parametric differential equations. In this paper, we deal with polynomial differential systems in $\mathbb{R}^n$ of the form
\begin{equation}\label{eq1-0}
\begin{split}
\dot{\boldsymbol{x}}=\boldsymbol{f}(\boldsymbol{x},\boldsymbol{\mu}),\quad \boldsymbol{f}:\mathbb{R}^n\times\mathbb{R}^p\rightarrow\mathbb{R}^n,
\end{split}
\end{equation}
where $\boldsymbol{x}=(x_1,\ldots,x_n)$ are variables, $\boldsymbol{\mu}=(\mu_1,\ldots,\mu_p)$ are real parameters, and $\boldsymbol{f}=(f_1,\ldots,f_n)$ are polynomials in $\mathbb{R}[\boldsymbol{x}]$.

Bifurcation analysis for differential systems of the form \eqref{eq1-0} is a central problem that has been extensively explored in the theory of dynamical systems. The analysis of bifurcations of dynamical systems usually involves heavy computations with large symbolic expressions which can neither be performed numerically using limited precision, nor be done effectively by hand for the complexity in question. In the past few decades, symbolic methods have been explored extensively in terms of the qualitative analysis of dynamical systems. To a large extent, the tedious deduction problems in qualitative analysis have been resolved and many encouraging results have been obtained regarding the stability analysis of dynamical systems \cite{HLS97,WDMXBC2005,HTX15}, the determination of
center conditions \cite{dw91,vd09,acj17}, the estimation of the number of limit cycles \cite{HGR10,CCMYZ13,SH2017}, the investigation of invariant algebraic curves \cite{ykm1993,bhlr2015,adrt2016}, etc. More recent progress on symbolic computation methods for the qualitative theory of differential equations can be found in \cite{HNW2022}.

In this paper, we study the zero-Hopf bifurcations of polynomial differential systems of the form \eqref{eq1-0}. More precisely, \textit{we would like to compute a partition of the parametric space of $\boldsymbol{\mu}$ such that, inside every open cell of the partition, the system can have the maximum number of limit cycles bifurcating from a zero-Hopf equilibrium point}. The main techniques are based on the averaging method and some algebraic methods with exact symbolic computation. The zero-Hopf bifurcation in three-dimensional case has been extensively studied in the literature (see \cite{GH93,YK04,jlcv2011,EJL17} and references therein); and it has been shown that some complicated invariant sets of the unfolding could be bifurcated from the isolated zero-Hopf equilibrium under some conditions. Hence, in some cases zero-Hopf bifurcation could imply a local birth of ``chaos'' (see, e.g., \cite{SJM84,BMS06}). The ideas of using the averaging method to study zero-Hopf bifurcations of nonlinear differential systems are already presented in several papers (see \cite{JLAM2016,BLM2016,EJL17,JA20,JLXZ09,dpxz2009,lzhang2011,BLV18,jlmrc2018,jlzy2021} and references therein). Here we summarize the used techniques and introduce an algorithmic approach for systematically analyzing zero-Hopf bifurcations by using symbolic computation. It should be noted that center manifold theory and normal form theory are also powerful tools for the analysis of zero-Hopf bifurcations of nonlinear differential systems (see, e.g., \cite{GH93,YK04,HY12}). Algorithms have been developed for the computation of center manifolds and normal forms \cite{BiYu99,TiYu14}, but they give no qualitative information about the bifurcated limit cycles. In contrast, using averaged functions, one can determine the shape of the bifurcated limit cycles up to any order in the parameter $\varepsilon$ (see, e.g., \cite{JLXZ09,LMB09}). On the other hand, it has been shown in \cite{BPY20} that the averaging method may be unable to detect possible limit cycles bifurcating from a zero-Hopf equilibrium, while the normal form theory can be used to overcome the difficulty.

Recall that a limit cycle of differential system \eqref{eq1-0} is an isolated periodic orbit of the system. The method of averaging is one of the best analytical methods to study limit cycles of differential systems in the presence of a small parameter $\varepsilon$ (see \cite{LMS15,LNT14} and references therein). Roughly speaking, in averaging theory one replaces a vector field by its average (over time or an angular variable) with the goal of obtaining asymptotic approximations to the original system that will be capable of guaranteeing the existence of limit cycles. The averaging method provides a straightforward calculation approach for determining the number of limit cycles of the regarded particular class of differential systems. However, in practice, the evaluation of the averaged functions is a computational problem that requires powerful computerized resources. Moreover, the computational complexity grows very fast when the averaging order, degree, or dimension of the required systems increases. The aim of this paper is twofold: first, we develop an efficient symbolic program using Maple for computing the averaged functions of any order for continuous differential systems in arbitrary dimension. Second, we use symbolic computation methods (combined with the program) to provide a systematical and algorithmic approach for analyzing zero-Hopf bifurcations of polynomial differential systems.




This paper is organized as follows. In Section \ref{sect2}, we first introduce the concept of zero-Hopf bifurcation of differential systems and then present the basic theory of the averaging method for studying limit cycles. Section \ref{main-1} contains the main results, such as Lemma \ref{lem-main-1}, which gives the parametric formula of the standard form of averaging associated to system \eqref{equ-2}; and Theorem \ref{semi-averaging}, which can be used to derive sufficient conditions for a given parametric differential system of the form \eqref{equ-2} to have a prescribed number of limit cycles. In Section \ref{sect3}, we present the outline of the Maple program for computing the averaged functions \eqref{equ2-3} and propose an algorithmic approach for automatically deriving Theorem \ref{semi-averaging}. The effectiveness of our computational approach is demonstrated in Section \ref{sect4} by using results obtained for a famous jerk differential system, a class of generalized Lorenz systems, and a 4D hyperchaotic differential system. The paper is concluded with a few remarks.

The proofs of Lemma \ref{lem-main-1}, Theorem \ref{main-th-1}, and Theorem \ref{main-2} are given in appendices. The Maple program developed in this paper can be found at{\small{
{\textcolor{blue}{\underline{\url{https://github.com/Bo-Math/zero-Hopf}}}}}}.

\section{Zero-Hopf Bifurcation and the Theory of Averaging}\label{sect2}
This section is devoted to the concept of zero-Hopf bifurcation of differential systems and the basic theory of the averaging method for studying limit cycles. We assume that a polynomial differential system of the form \eqref{eq1-0} has a singularity at the origin. Let $J$ be the Jacobian matrix of the associated linearized system at the origin. When $J$ has a pair of purely imaginary eigenvalues (e.g., $\pm bi$) and the other eigenvalues are nonzero, we call the origin a Hopf equilibrium point. If some of the eigenvalues of $J$ other than $\pm bi$ are zero, then the origin is called a zero-Hopf equilibrium point. Limit cycles may bifurcate from Hopf equilibria in nonlinear differential systems of the form \eqref{eq1-0} as the values of the parameters in the coefficients vary. A generic Hopf bifurcation is a bifurcation of a limit cycle from a Hopf equilibrium and a zero-Hopf bifurcation is a bifurcation of a limit cycle from a zero-Hopf equilibrium. Here we are interested in zero-Hopf bifurcations where all the eigenvalues of $J$ different from $\pm bi$ are zero; such kind of zero-Hopf bifurcation is called \textit{complete} zero-Hopf bifurcation.

The investigations in this paper are focused on \textit{complete} zero-Hopf bifurcations for differential systems of the form \eqref{eq1-0}. In this case, system \eqref{eq1-0} of degree at most $N$ can be written in the form
\begin{equation}\label{equ-1}
\begin{split}
\dot{x}_1&=f_1=-bx_2+\sum_{m\geq2}^{N}\sum_{i_1+\cdots+i_n=m}a_{i_1,\ldots,i_n}x_1^{i_1}\cdots x_n^{i_n},\\
\dot{x}_2&=f_2=bx_1+\sum_{m\geq2}^{N}\sum_{i_1+\cdots+i_n=m}b_{i_1,\ldots,i_n}x_1^{i_1}\cdots x_n^{i_n},\\
\dot{x}_s&=f_s=\sum_{m\geq2}^{N}\sum_{i_1+\cdots+i_n=m}c_{i_1,\ldots,i_n,s}x_1^{i_1}\cdots x_n^{i_n},
\end{split}
\end{equation}
where $s=3,\ldots,n$, $b\neq0$, $a_{i_1,i_2,\ldots,i_n}$, $b_{i_1,i_2,\ldots,i_n}$ and $c_{i_1,i_2,\ldots,i_n,s}$ are real parameters.

Our objective is to determine how many limit cycles can bifurcate from the origin, as a zero-Hopf equilibrium of system \eqref{equ-1}, when the system is perturbed inside the class of differential systems of the same form. We shall use the method of averaging \cite{LNT14}, up to an arbitrary order in $\varepsilon$ for studying limit cycles of differential systems. To do this, one usually considers the following perturbations of system \eqref{equ-1}
\begin{equation}\label{equ-2}
\begin{split}
\dot{x}_1&=f_1+p_1(x_1,\ldots,x_n,\varepsilon),\\
\dot{x}_2&=f_2+p_2(x_1,\ldots,x_n,\varepsilon),\\
\dot{x}_s&=f_s+p_s(x_1,\ldots,x_n,\varepsilon),\quad s=3,\ldots,n,
\end{split}
\end{equation}
where
\begin{equation}
\begin{split}
p_1&=\sum_{j=1}^k\varepsilon^j\sum_{m^*\geq1}^{N}\sum_{i_1+\cdots+i_n=m^*}a_{j,i_1,\ldots,i_n}x_1^{i_1}\cdots x_n^{i_n},\\
p_2&=\sum_{j=1}^k\varepsilon^j\sum_{m^*\geq1}^{N}\sum_{i_1+\cdots+i_n=m^*}b_{j,i_1,\ldots,i_n}x_1^{i_1}\cdots x_n^{i_n},\\
p_s&=\sum_{j=1}^k\varepsilon^j\sum_{m^*\geq1}^{N}\sum_{i_1+\cdots+i_n=m^*}c_{j,i_1,\ldots,i_n,s}x_1^{i_1}\cdots x_n^{i_n},\nonumber
\end{split}
\end{equation}
the constants $a_{j,i_1,\ldots,i_n}$, $b_{j,i_1,\ldots,i_n}$ and $c_{j,i_1,\ldots,i_n}$ are real, and $\varepsilon$ is a small parameter.

Since the eigenvalues of the linearization of system \eqref{equ-2} at the origin when $\varepsilon=0$ are $\pm bi\neq0$ and 0 with multiplicity $n-2$, if one or several infinitesimal periodic orbits of system \eqref{equ-2} bifurcate from the origin at $\varepsilon=0$, we see that such kind of bifurcation is then a \textit{complete} zero-Hopf bifurcation. In this paper, by using the averaging method, we are interested in the maximum number of limit cycles of system \eqref{equ-2} for $|\varepsilon|$ sufficiently small, which bifurcate from the origin in such a zero-Hopf bifurcation.

The averaging method deals with differential systems in the following standard form
\begin{equation}\label{equ2-1}
\begin{split}
\frac{d\boldsymbol{x}}{dt}=\sum_{i=0}^k\varepsilon^i\boldsymbol{F}_i(t,\boldsymbol{x})+\varepsilon^{k+1}\boldsymbol{R}(t,\boldsymbol{x},\varepsilon),
\end{split}
\end{equation}
where $\boldsymbol{F}_i:\mathbb{R}\times D\rightarrow\mathbb{R}^n$ for $i=0,1,\ldots,k$, and $\boldsymbol{R}:\mathbb{R}\times D\times(-\varepsilon_0,\varepsilon_0)\rightarrow\mathbb{R}^n$ are continuous functions, and $T$-periodic in the variable $t$, $D$ being an open subset of $\mathbb{R}^n$.

We first introduce some notations before presenting the main results about this method. Let $L$ be a positive integer, $\boldsymbol{x}=(x_1,\ldots,x_n)\in D$, $t\in\mathbb{R}$ and $\boldsymbol{y}_j=(y_{j1},\ldots,y_{jn})\in\mathbb{R}^n$ for $j=1,\ldots,L$. Given $\boldsymbol{F}:\mathbb{R}\times D\rightarrow\mathbb{R}^n$ a sufficiently smooth function, for each $(t,\boldsymbol{x})\in\mathbb{R}\times D$ we denote by $\partial^L\boldsymbol{F}(t,\boldsymbol{x})$ a symmetric $L$-multilinear map which is applied to a `product' of $L$ vectors of $\mathbb{R}^n$, which we denote as $\bigodot_{j=1}^L\boldsymbol{y}_j\in\mathbb{R}^{nL}$. The definition of this $L$-multilinear map is
\begin{equation}\label{equ2-2}
\begin{split}
\partial^L\boldsymbol{F}(t,\boldsymbol{x})\bigodot_{j=1}^L\boldsymbol{y}_j=\sum_{i_1,\ldots,i_L=1}^n
\frac{\partial^L\boldsymbol{F}(t,\boldsymbol{x})}{\partial x_{i_1}\cdots\partial x_{i_L}}y_{1i_1}\cdots y_{Li_L}.
\end{split}
\end{equation}

\begin{remark}\label{remk2-1}
The $L$-multilinear map defined in \eqref{equ2-2} is the $L$th Fr\'echet derivative of the function $\boldsymbol{F}(t,\boldsymbol{x})$ with respect to the variable $\boldsymbol{x}$. Given a positive integer $b$ and a vector $\boldsymbol{y}\in\mathbb{R}^n$, we also denote $\boldsymbol{y}^b=\bigodot_{i=1}^b\boldsymbol{y}\in\mathbb{R}^{nb}$.
\end{remark}

The averaging method consists in defining a collection of functions $\boldsymbol{f}_i:D\rightarrow\mathbb{R}^n$, called the $i$th-order averaged function, for $i=1,2,\ldots,k$, whose isolated zeros control, for $\varepsilon$ sufficiently small, the limit cycles of system \eqref{equ2-1}. In Llibre-Novaes-Teixeira \cite{LNT14} it has been established that
\begin{equation}\label{equ2-3}
\begin{split}
\boldsymbol{f}_i(\boldsymbol{z})=\frac{\boldsymbol{y}_i(T,\boldsymbol{z})}{i!},
\end{split}
\end{equation}
where $\boldsymbol{y}_i:\mathbb{R}\times D\rightarrow\mathbb{R}^n$, for $i=1,2,\ldots,k$, are defined recurrently by the following integral equations
\begin{equation}\label{equ2-4}
\begin{split}
\boldsymbol{y}_1(t,\boldsymbol{z})&=\int_0^t\boldsymbol{F}_1(\theta,\boldsymbol{z})d\theta,\\
\boldsymbol{y}_i(t,\boldsymbol{z})&=i!\int_0^t\Big(\boldsymbol{F}_i(\theta,\boldsymbol{z})
+\sum_{\ell=1}^{i-1}\sum_{S_{\ell}}\frac{1}{b_1!b_2!2!^{b_2}\cdots b_{\ell}!\ell!^{b_{\ell}}}\times\\
&\partial^L\boldsymbol{F}_{i-\ell}(\theta,\boldsymbol{z})
\bigodot_{j=1}^{\ell}\boldsymbol{y}_j(\theta,\boldsymbol{z})^{b_j}\Big)d\theta.
\end{split}
\end{equation}
Here $S_{\ell}$ is the set of all $\ell$-tuples of nonnegative integers $[b_1,b_2,\ldots,b_{\ell}]$ satisfying $b_1+2b_2+\cdots+\ell b_{\ell}=\ell$ and $L=b_1+b_2+\cdots+b_{\ell}$. For sake of simplicity, in \eqref{equ2-4}, we are assuming that $\boldsymbol{F}_0=0$. Remark that, related to the averaged functions \eqref{equ2-3} there exist two fundamentally different cases of system \eqref{equ2-1}, namely, when $\boldsymbol{F}_0=0$ and when $\boldsymbol{F}_0\neq0$. We see that when $\boldsymbol{F}_0\neq0$, the formula for $\boldsymbol{y}_i(t,\boldsymbol{z})$ requires the solution of a Cauchy problem for $i=1,2,\ldots,n$ (see \cite[Remark 3]{LNT14}). The investigation in this paper is restricted to the case where $\boldsymbol{F}_0=0$.

The following averaging theorem provides a criterion for the existence of limit cycles. Its proof can be found in \cite{LNT14}.
\begin{theorem}\label{averaging-thm}
Assuming the following conditions:
\begin{enumerate}
  \item for each $i=1,2,\ldots,k$ and $t\in\mathbb{R}$, the function $\boldsymbol{F}_i(t,\boldsymbol{x})$ is of class $\mathcal{C}^{k-i}$, $\partial^{k-i}\boldsymbol{F}_i$ is locally Lipschitz in $\boldsymbol{x}$, and $\boldsymbol{R}$ is a continuous function locally Lipschitz in $\boldsymbol{x}$;
  \item for some $j\in\{1,2,\ldots,k\}$, $\boldsymbol{f}_i=0$ for $i=1,2,\ldots,j-1$ and $\boldsymbol{f}_j\neq0$;
  \item for some $\boldsymbol{z}^*\in D$ with $\boldsymbol{f}_j(\boldsymbol{z}^*)=0$ we have ${\rm{det}}(J_{\boldsymbol{f}_j}(\boldsymbol{z}^*))\neq0$.
\end{enumerate}
Then, for $|\varepsilon|>0$ sufficiently small, there exists a $T$-periodic solution $\boldsymbol{x}(t,\varepsilon)$ of \eqref{equ2-1} such that $\boldsymbol{x}(0,\varepsilon)\rightarrow\boldsymbol{z}^*$ when $\varepsilon\rightarrow0$.
\end{theorem}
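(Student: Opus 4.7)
The plan is to reduce the existence of a $T$-periodic orbit to the existence of a zero of a suitable displacement function and then apply the implicit function theorem. Let $\boldsymbol{x}(t,\boldsymbol{z},\varepsilon)$ denote the solution of \eqref{equ2-1} with $\boldsymbol{x}(0,\boldsymbol{z},\varepsilon)=\boldsymbol{z}$. Because $\boldsymbol{F}_0\equiv 0$, the unperturbed flow is the identity, so $\boldsymbol{x}(t,\boldsymbol{z},0)=\boldsymbol{z}$ and every initial condition is trivially $T$-periodic at $\varepsilon=0$. I would therefore introduce the stroboscopic displacement
$$
\boldsymbol{d}(\boldsymbol{z},\varepsilon):=\boldsymbol{x}(T,\boldsymbol{z},\varepsilon)-\boldsymbol{z},
$$
so that $T$-periodic solutions correspond exactly to the zeros of $\boldsymbol{d}(\cdot,\varepsilon)$. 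The whole argument now consists in locating a zero of $\boldsymbol{d}$ close to $\boldsymbol{z}^*$ when $|\varepsilon|$ is small.

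The first and central step is the asymptotic expansion
$$
\boldsymbol{d}(\boldsymbol{z},\varepsilon)=\sum_{i=1}^{k}\varepsilon^{i}\boldsymbol{f}_i(\boldsymbol{z})+O(\varepsilon^{k+1}),
$$
with $\boldsymbol{f}_i$ defined by \eqref{equ2-3}--\eqref{equ2-4}. To derive it, I would write the solution as $\boldsymbol{x}(t,\boldsymbol{z},\varepsilon)=\boldsymbol{z}+\sum_{i=1}^{k}\varepsilon^{i}\boldsymbol{y}_i(t,\boldsymbol{z})/i!+O(\varepsilon^{k+1})$, substitute into the integrated form of \eqref{equ2-1}, and Taylor-expand each $\boldsymbol{F}_{i-\ell}(\theta,\boldsymbol{x})$ around $\boldsymbol{z}$ using the symmetric $L$-linear maps $\partial^{L}\boldsymbol{F}_{i-\ell}$ introduced in \eqref{equ2-2}. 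Matching coefficients of $\varepsilon^{i}$ produces an integral recursion for $\boldsymbol{y}_i$; the combinatorial weights $1/(b_1!\,b_2!\,2!^{b_2}\cdots b_{\ell}!\,\ell!^{b_{\ell}})$ together with the partition constraint $b_1+2b_2+\cdots+\ell b_{\ell}=\ell$ are exactly those produced by the Fa\`a di Bruno formula applied to the composition $\boldsymbol{F}_{i-\ell}(\theta,\boldsymbol{x}(\theta,\boldsymbol{z},\varepsilon))$, which yields precisely \eqref{equ2-4}. Hypothesis (1) — namely $\boldsymbol{F}_i\in\mathcal{C}^{k-i}$, $\partial^{k-i}\boldsymbol{F}_i$ locally Lipschitz, and $\boldsymbol{R}$ locally Lipschitz in $\boldsymbol{x}$ — is what guarantees that the Taylor remainders combine into a uniform $O(\varepsilon^{k+1})$ error on compact subsets of $D$ and that $\boldsymbol{d}$ is $\mathcal{C}^{1}$ in $\boldsymbol{z}$ with a Jacobian admitting the corresponding expansion.

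With the expansion available, hypothesis (2) permits the factorization
$$
\boldsymbol{d}(\boldsymbol{z},\varepsilon)=\varepsilon^{j}\bigl(\boldsymbol{f}_j(\boldsymbol{z})+\varepsilon\,\boldsymbol{g}(\boldsymbol{z},\varepsilon)\bigr),
$$
for some continuous $\boldsymbol{g}$ that is $\mathcal{C}^{1}$ in $\boldsymbol{z}$. For $\varepsilon\neq 0$ the periodic-orbit problem is thus equivalent to solving $\boldsymbol{H}(\boldsymbol{z},\varepsilon):=\boldsymbol{f}_j(\boldsymbol{z})+\varepsilon\,\boldsymbol{g}(\boldsymbol{z},\varepsilon)=0$. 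Hypothesis (3) gives $\boldsymbol{H}(\boldsymbol{z}^*,0)=\boldsymbol{f}_j(\boldsymbol{z}^*)=0$ and $\partial_{\boldsymbol{z}}\boldsymbol{H}(\boldsymbol{z}^*,0)=J_{\boldsymbol{f}_j}(\boldsymbol{z}^*)$ invertible, so the implicit function theorem produces a continuous branch $\varepsilon\mapsto\boldsymbol{z}(\varepsilon)$ of zeros of $\boldsymbol{H}$ with $\boldsymbol{z}(\varepsilon)\to\boldsymbol{z}^*$ as $\varepsilon\to 0$. The required $T$-periodic solution is then $\boldsymbol{x}(t,\boldsymbol{z}(\varepsilon),\varepsilon)$, concluding the proof.

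The main obstacle is not the final implicit-function-theorem step, which is standard, but the careful bookkeeping required to establish the expansion of $\boldsymbol{d}$ and to match the Fa\`a di Bruno coefficients with those appearing in \eqref{equ2-4}. This step is essentially combinatorial: one must track how the partitions $b_1+2b_2+\cdots+\ell b_{\ell}=\ell$ arise from iterating Taylor expansions of compositions, and verify that the smoothness assumption $\boldsymbol{F}_i\in\mathcal{C}^{k-i}$ is exactly enough for the expansion to close at order $\varepsilon^{k}$ with a Lipschitz remainder. Once this expansion is justified, the remainder of the argument is a short topological/analytic deduction.
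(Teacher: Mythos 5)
This theorem is not proved in the paper at all: it is quoted from Llibre--Novaes--Teixeira \cite{LNT14}, and your sketch follows essentially the route of that reference (stroboscopic displacement map, expansion of the time-$T$ map in powers of $\varepsilon$ with the Fa\`a di Bruno bookkeeping producing exactly the weights in \eqref{equ2-4}, factor out $\varepsilon^{j}$, locate a zero near $\boldsymbol{z}^*$). The one place where you diverge from the cited proof is the final step, and it contains a small but genuine gap: you assert that hypothesis (1) makes $\boldsymbol{d}(\cdot,\varepsilon)$ of class $\mathcal{C}^1$ in $\boldsymbol{z}$, but under (1) the top-order terms $\boldsymbol{F}_k$ and $\boldsymbol{R}$ are only locally Lipschitz in $\boldsymbol{x}$, so the flow --- hence $\boldsymbol{d}$ and your $\boldsymbol{g}(\cdot,\varepsilon)$ --- is in general only Lipschitz in $\boldsymbol{z}$ for $\varepsilon\neq 0$, and the classical implicit function theorem does not apply to $\boldsymbol{H}(\boldsymbol{z},\varepsilon)=\boldsymbol{f}_j(\boldsymbol{z})+\varepsilon\,\boldsymbol{g}(\boldsymbol{z},\varepsilon)$. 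This is precisely why \cite{LNT14} phrase the conclusion via Brouwer degree: since $\boldsymbol{f}_j$ is $\mathcal{C}^1$ with $\det(J_{\boldsymbol{f}_j}(\boldsymbol{z}^*))\neq 0$, the degree of $\boldsymbol{f}_j$ on a small ball $B$ around $\boldsymbol{z}^*$ is $\pm 1$, and $\boldsymbol{H}(\cdot,\varepsilon)\to\boldsymbol{f}_j$ uniformly on $\overline{B}$, so homotopy invariance yields a zero $\boldsymbol{z}(\varepsilon)\in B$ for all small $\varepsilon$, with $\boldsymbol{z}(\varepsilon)\to\boldsymbol{z}^*$ after shrinking $B$. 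Your implicit-function-theorem shortcut is the one alluded to in Remark~\ref{remk2-2}, but it needs either this degree argument in place of the IFT, or an explicit strengthening of the regularity hypotheses so that $\boldsymbol{H}$ really is $\mathcal{C}^1$ in $\boldsymbol{z}$ uniformly in $\varepsilon$. The expansion step and the role of hypotheses (2)--(3) are otherwise exactly as in the cited proof.
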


\begin{remark}\label{remk2-2}
The notation $\mbox{det}(J_{\boldsymbol{f}_j}(\boldsymbol{z}))\neq0$ means that the Jacobian determinant of $\boldsymbol{f}_j$ at $\boldsymbol{z}\in V$ is nonzero. In Theorem \ref{averaging-thm}, the function $\boldsymbol{f}_j$ for $j\in\{1,2,\ldots,k\}$ is assumed to be a $\mathcal{C}^1$ function. In this case, instead of Brouwer degree theory, the implicit function theorem could be used to
prove Theorem \ref{averaging-thm}, see \cite[Remark 4]{LNT14}.

\end{remark}

In practical terms, the evaluation of the recurrence \eqref{equ2-4} is a computational problem. Recently, Novaes \cite{dn17} used the partial Bell polynomials to provide an alternative formula for the recurrence \eqref{equ2-4}. This formula can make the computational implementation of the averaged functions \eqref{equ2-3} easier. A partial Bell polynomial is expressed by
\begin{equation}\label{equ2-5}
\begin{split}
B_{\ell,m}(x_1,\ldots,x_{\ell-m+1})=\sum_{\widetilde{S}_{\ell,m}}\frac{\ell!}{b_1!b_2!\cdots b_{\ell-m+1}!}\prod_{j=1}^{\ell-m+1}\left(\frac{x_j}{j!}\right)^{b_j},
\end{split}
\end{equation}
where $\ell$ and $m$ are positive integers, $\widetilde{S}_{\ell,m}$ is the set of all $(\ell-m+1)$-tuples of nonnegative integers $[b_1,b_2,\ldots,b_{\ell-m+1}]$ satisfying $b_1+2b_2+\cdots+(\ell-m+1)b_{\ell-m+1}=\ell$, and $b_1+b_2+\cdots+b_{\ell-m+1}=m$.

The following result provides an alternative formula for the averaged functions.

\begin{theorem}\label{thm2-a}
For $i=1,2,\ldots,k$ the recurrence \eqref{equ2-4} reads
\begin{equation}\label{equ2-6}
\begin{split}
\boldsymbol{y}_1(t,\boldsymbol{z})&=\int_0^t\boldsymbol{F}_1(\theta,\boldsymbol{z})d\theta,\\
\boldsymbol{y}_i(t,\boldsymbol{z})&=i!\int_0^t\Big(\boldsymbol{F}_i(\theta,\boldsymbol{z})+\sum_{\ell=1}^{i-1}\sum_{m=1}^{\ell}\frac{1}{\ell!}\times\\
&\partial^m\boldsymbol{F}_{i-\ell}(\theta,\boldsymbol{z})
B_{\ell,m}\big(\boldsymbol{y}_1(\theta,\boldsymbol{z}),\ldots,\boldsymbol{y}_{\ell-m+1}(\theta,\boldsymbol{z})\big)\Big)d\theta.
\end{split}
\end{equation}

\end{theorem}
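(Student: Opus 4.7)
The plan is to show that Theorem \ref{thm2-a} is a pure reorganization of the recurrence \eqref{equ2-4}, namely, that the inner double sum in \eqref{equ2-6} equals the inner sum in \eqref{equ2-4} term by term once one groups contributions according to the total order $L=m$ of the Fr\'echet derivative. First I would fix $i$ and $\ell$ in \eqref{equ2-4} and introduce $m:=L=b_1+b_2+\cdots+b_\ell$ as a new index variable. Since $S_\ell$ consists of tuples satisfying $b_1+2b_2+\cdots+\ell b_\ell=\ell$, subtracting $b_1+\cdots+b_\ell=m$ yields $b_2+2b_3+\cdots+(\ell-1)b_\ell=\ell-m\ge 0$, so $m$ ranges over $\{1,2,\ldots,\ell\}$. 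Moreover, if $b_j>0$ for some $j$, then $(j-1)b_j\le \ell-m$, forcing $j\le\ell-m+1$; hence the nonzero entries are confined to $[b_1,\ldots,b_{\ell-m+1}]$, and the resulting constrained tuples are exactly the index set $\widetilde S_{\ell,m}$ used in the Bell polynomial \eqref{equ2-5}.

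Next I would match the coefficients. Since $b_j=0$ for $j>\ell-m+1$, the weight in \eqref{equ2-4} simplifies to
\begin{equation*}
\frac{1}{b_1!\,b_2!\,2!^{b_2}\cdots b_\ell!\,\ell!^{b_\ell}}
=\frac{1}{b_1!\,b_2!\cdots b_{\ell-m+1}!}\,\prod_{j=1}^{\ell-m+1}\frac{1}{j!^{\,b_j}},
\end{equation*}
which is exactly $\frac{1}{\ell!}$ times the coefficient $\frac{\ell!}{b_1!\cdots b_{\ell-m+1}!}\prod_{j}(1/j!^{b_j})$ appearing in $B_{\ell,m}$. Because $\partial^{L}\boldsymbol F_{i-\ell}=\partial^{m}\boldsymbol F_{i-\ell}$ (they have the same order) and the $L$-multilinear map is symmetric, one can pull it outside the sum over $\widetilde S_{\ell,m}$ and identify the remaining factor with $B_{\ell,m}(\boldsymbol y_1,\ldots,\boldsymbol y_{\ell-m+1})/\ell!$ under the vector interpretation $\boldsymbol y_j^{\,b_j}=\bigodot_{i=1}^{b_j}\boldsymbol y_j$ explained in Remark \ref{remk2-1}. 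Summing over $m\in\{1,\ldots,\ell\}$ and then over $\ell\in\{1,\ldots,i-1\}$ reproduces \eqref{equ2-4}.

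The step I expect to require the most care is the last identification: the Bell polynomial $B_{\ell,m}(x_1,\ldots,x_{\ell-m+1})$ is defined for scalar arguments, whereas the $\boldsymbol y_j$ are vectors in $\mathbb R^n$ and must be fed into the symmetric $m$-multilinear form $\partial^m\boldsymbol F_{i-\ell}$. I would justify the substitution by viewing $B_{\ell,m}$ formally as a linear combination of monomials $\prod_j x_j^{b_j}$ and declaring the replacement $x_j^{b_j}\mapsto \bigodot_{i=1}^{b_j}\boldsymbol y_j$, which is well defined precisely because $\partial^m\boldsymbol F_{i-\ell}$ is multilinear and symmetric in its $m$ vector arguments (so the order in which the $b_j$ copies of $\boldsymbol y_j$ are placed is irrelevant). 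Once this correspondence is fixed, the equality of \eqref{equ2-4} and \eqref{equ2-6} follows by bookkeeping, and the base case $\boldsymbol y_1$ is unchanged. No additional analytic input is needed beyond the definitions already recorded in equations \eqref{equ2-2}--\eqref{equ2-5}.
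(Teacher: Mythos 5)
Your proof is correct. Note that the paper does not actually prove Theorem \ref{thm2-a}: it quotes the formula from Novaes \cite{dn17}, so there is no in-text argument to compare against; your reindexing argument is the standard derivation and fills that gap. Specifically, partitioning $S_\ell$ by $m=b_1+\cdots+b_\ell$, observing that $b_2+2b_3+\cdots+(\ell-1)b_\ell=\ell-m$ forces $1\le m\le\ell$ and $b_j=0$ for $j>\ell-m+1$ (so the fibers are exactly $\widetilde S_{\ell,m}$), and checking that $\prod_{j}\frac{1}{b_j!\,(j!)^{b_j}}$ equals $\frac{1}{\ell!}$ times the Bell coefficient $\frac{\ell!}{\prod_j b_j!}\prod_j (1/j!)^{b_j}$, is all that is needed. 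Your final caveat about feeding vector arguments into the scalar-defined $B_{\ell,m}$ via the symmetric $m$-multilinear map is exactly the subtlety the paper itself flags in Remark \ref{remk2-3}, and your resolution (monomial-by-monomial substitution, well defined by symmetry and multilinearity of $\partial^m\boldsymbol{F}_{i-\ell}$) is the right one.
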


\begin{remark}\label{remk2-3}
Theorem \ref{thm2-a} gives an equivalent formulation of the averaged functions via Bell polynomial. But, this equivalent formulation is just focused on the form of the expressions, can not easily be applied to the computation of averaged functions for high-dimensional differential systems due to the operation of the $L$-multilinear map (see \eqref{equ2-2}) is missed. In other words, the $m$-multilinear map in \eqref{equ2-6} should be applied to each term of the Bell polynomial $B_{\ell,m}\big(\boldsymbol{y}_1(\theta,\boldsymbol{z}),\ldots,\boldsymbol{y}_{\ell-m+1}(\theta,\boldsymbol{z})\big)$. For this reason, the symbolic program we develop in this paper is based on both the $L$-multilinear map and Bell polynomials, which can be used efficiently for generating the integral equations \eqref{equ2-6}.
\end{remark}

Note that, under the assumptions of Theorem \ref{averaging-thm}, one can obtain the following expression of the limit cycle associated to the zero $\boldsymbol{z}^*$ of $\boldsymbol{f}_j(\boldsymbol{z})$:
\begin{equation}\label{equ2-9}
\begin{split}
\boldsymbol{x}(t,\varepsilon)&=\boldsymbol{z}^*+\varepsilon \boldsymbol{y}_1(t,\boldsymbol{z}^*)+\varepsilon^2\frac{\boldsymbol{y}_2(t,\boldsymbol{z}^*)}{2!}+\cdots\\
&\quad+\varepsilon^j \frac{\boldsymbol{y}_j(t,\boldsymbol{z}^*)}{j!}+\mathcal{O}(\varepsilon^{j+1}).
\end{split}
\end{equation}
See \cite{LNT14} for a proof of this result. More detailed discussions
of the averaging method, including applications, can be found in \cite{SVM2007,LMS15}.

The averaging method allows one to find periodic solutions of periodic nonautonomous differential systems (see \eqref{equ2-1}). However we are interested in using it for analyzing limit cycles bifurcating from a zero-Hopf equilibrium of the autonomous differential system \eqref{equ-2}. The general study of the exact number of isolated zeros of the averaged functions \eqref{equ2-3} up to every order is usually very difficult to be done, since the averaged functions may be too complicated, such as including square root functions, logarithmic functions, and elliptic integrals. But for zero-Hopf bifurcation analysis, the main work here is to study the maximum number of real solutions of a resulting polynomial system obtained from the averaged functions. Some advanced techniques from symbolic computation, such as Gr\"obner basis \cite{BB85}, triangular decomposition \cite{WTW00,DW01}, quantifier elimination \cite{GEC75,CHH91}, and real solution classification \cite{YHX01,LYBX05} may be used to perform the task.

\section{Theoretical Results on the Number of Limit Cycles}\label{main-1}

In this section, we present the main results on the number of limit cycles of system \eqref{equ-2} obtained by using symbolic computation methods. To study the zero-Hopf bifurcation for perturbed system \eqref{equ-2}, we should perform several changes of coordinates to transform the perturbed system into the standard form of averaging. The main results in this paper are based on the following Lemma \ref{lem-main-1}. Its proof can be found in Appendix \ref{sec-A}.

\begin{lemma}\label{lem-main-1}
The parametric formula of the standard form of averaging associated to system \eqref{equ-2} is as follows:
\begin{equation}\label{equ3-0-1}
\begin{split}
\frac{dR}{d\theta}&=\frac{R\big[\cos\theta\cdot S_{1}(\theta,R,X_3,\ldots,X_n,\varepsilon)+\sin\theta\cdot S_{2}(\theta,R,X_3,\ldots,X_n,\varepsilon)\big]}{bR+\cos\theta\cdot S_{2}(\theta,R,X_3,\ldots,X_n,\varepsilon)-\sin\theta\cdot S_{1}(\theta,R,X_3,\ldots,X_n,\varepsilon)},\\
&=\varepsilon F_{1,1}(\theta,R,X_3,\ldots,X_n)+\varepsilon^2F_{2,1}(\theta,R,X_3,\ldots,X_n)\\
&\quad+\cdots+\varepsilon^kF_{k,1}(\theta,R,X_3,\ldots,X_n)+\mathcal{O}(\varepsilon^{k+1}),\\
\frac{dX_s}{d\theta}&=\frac{R\cdot S_{s}(\theta,R,X_3,\ldots,X_n,\varepsilon)}{bR+\cos\theta\cdot S_{2}(\theta,R,X_3,\ldots,X_n,\varepsilon)-\sin\theta\cdot S_{1}(\theta,R,X_3,\ldots,X_n,\varepsilon)},\\
&=\varepsilon F_{1,s}(\theta,R,X_3,\ldots,X_n)+\varepsilon^2F_{2,s}(\theta,R,X_3,\ldots,X_n)\\
&\quad+\cdots+\varepsilon^kF_{k,s}(\theta,R,X_3,\ldots,X_n)+\mathcal{O}(\varepsilon^{k+1}),
\quad s=3,\ldots,n,
\end{split}
\end{equation}
where
\begin{equation}
\begin{split}
&S_{1}=\sum_{m\geq2}^N\varepsilon^{m-1}\sum_{i_1+\cdots+i_n=m}a_{i_1,\ldots,i_n}\Psi(\theta,R,X_3,\ldots,X_n)\\
&+\sum_{j=1}^k\varepsilon^j\sum_{m^*\geq1}^N\varepsilon^{m^*-1}\sum_{i_1+\cdots+i_n=m^*}a_{j,i_1,\ldots,i_n}\Psi(\theta,R,X_3,\ldots,X_n),\\
&S_{2}=\sum_{m\geq2}^N\varepsilon^{m-1}\sum_{i_1+\cdots+i_n=m}b_{i_1,\ldots,i_n}\Psi(\theta,R,X_3,\ldots,X_n)\\
&+\sum_{j=1}^k\varepsilon^j\sum_{m^*\geq1}^N\varepsilon^{m^*-1}\sum_{i_1+\cdots+i_n=m^*}b_{j,i_1,\ldots,i_n}\Psi(\theta,R,X_3,\ldots,X_n),\\
&S_{s}=\sum_{m\geq2}^N\varepsilon^{m-1}\sum_{i_1+\cdots+i_n=m}c_{i_1,\ldots,i_n,s}\Psi(\theta,R,X_3,\ldots,X_n)\\
&+\sum_{j=1}^k\varepsilon^j\sum_{m^*\geq1}^N\varepsilon^{m^*-1}
\sum_{i_1+\cdots+i_n=m^*}c_{j,i_1,\ldots,i_n,s}\Psi(\theta,R,X_3,\ldots,X_n)\nonumber
\end{split}
\end{equation}
with $\Psi(\theta,R,X_3,\ldots,X_n)=(R\cos\theta)^{i_1}(R\cos\theta)^{i_2}X_3^{i_3}\cdots X_n^{i_n}$. The functions $F_{j_1,j_2}(\theta,R,X_3,\ldots,X_n)$, $j_1=1,2,\ldots,k$, $j_2=1,3,\ldots,n$ are obtained by carrying Taylor expansion of expressions in \eqref{equ3-0-1} with respective to the variable $\varepsilon$ around $\varepsilon=0$.
\end{lemma}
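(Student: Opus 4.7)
The plan is to bring system \eqref{equ-2} into the claimed form by three consecutive changes of variables (an $\varepsilon$-rescaling of the state, a polar change in the first two coordinates, and a swap of the independent variable from $t$ to $\theta$), and then to read off the asserted power series by expanding the resulting fractions in $\varepsilon$.

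First I would perform the uniform rescaling $x_i \mapsto \varepsilon x_i$ for $i=1,\ldots,n$ and divide the vector field by $\varepsilon$. Under this change a degree-$m$ monomial in the unperturbed nonlinearity of \eqref{equ-1} picks up a factor $\varepsilon^{m-1}$, while a degree-$m^*$ monomial inside the $\varepsilon^{j}$-piece of $p_i$ picks up $\varepsilon^{j+m^*-1}$; these are exactly the $\varepsilon$-weights appearing in the definitions of $S_{1}, S_{2}, S_{s}$ in the statement. Consequently the system takes the concise form $\dot x_1 = -bx_2 + S_{1}$, $\dot x_2 = bx_1 + S_{2}$, $\dot x_s = S_{s}$, in which every term of the $S$'s carries at least one power of $\varepsilon$.

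Next I would apply the cylindrical change $x_1 = R\cos\theta$, $x_2 = R\sin\theta$, $x_s = X_s$ for $s\geq 3$, which turns the generic monomial $x_1^{i_1}\cdots x_n^{i_n}$ into the function $\Psi$ of the statement. From the standard identities $\dot R = \dot x_1\cos\theta + \dot x_2\sin\theta$ and $R\dot\theta = -\dot x_1\sin\theta + \dot x_2\cos\theta$, the linear $\pm b$ block cancels in the $\dot R$ equation and contributes $bR$ to the $R\dot\theta$ equation, giving
\begin{equation*}
\dot R = \cos\theta\cdot S_{1}+\sin\theta\cdot S_{2},\qquad R\dot\theta = bR+\cos\theta\cdot S_{2}-\sin\theta\cdot S_{1},\qquad \dot X_s = S_{s}.
\end{equation*}
Since $\dot\theta = b + O(\varepsilon)$, for $|\varepsilon|$ sufficiently small the angle $\theta$ is strictly monotone and can be promoted to the new independent variable; dividing $\dot R$ and $\dot X_s$ by $\dot\theta$ then gives the two rational expressions displayed in \eqref{equ3-0-1}.

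Finally, since the denominator equals $bR$ at $\varepsilon=0$, I would expand $1/(\text{denominator})$ as the geometric series $(bR)^{-1}\sum_{\ell\geq 0}\bigl((bR)^{-1}(\sin\theta\cdot S_{1}-\cos\theta\cdot S_{2})\bigr)^{\ell}$, multiply by the corresponding numerators, regroup by powers of $\varepsilon$, and truncate at order $\varepsilon^{k}$; the resulting coefficients are by definition the functions $F_{j_1,j_2}$ in the lemma. The whole argument is essentially bookkeeping; the only mildly delicate step is the geometric expansion, which requires $R$ to be bounded away from $0$. This is not a real obstruction, because the standard averaging framework \eqref{equ2-1} is applied on an open set $D\subset\mathbb{R}^n$ bounded away from $R=0$, which is exactly where the subsequent zero-Hopf averaging analysis takes place.
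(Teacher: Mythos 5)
Your proposal is correct and follows essentially the same route as the paper's proof in Appendix A: the $\varepsilon$-rescaling of the state producing the weights $\varepsilon^{m-1}$ and $\varepsilon^{j+m^*-1}$, the cylindrical change $X_1=R\cos\theta$, $X_2=R\sin\theta$, the passage to $\theta$ as independent variable using $d\theta/dt=b+\mathcal{O}(\varepsilon)\neq 0$, and a final Taylor expansion in $\varepsilon$. Your explicit geometric-series justification of that last expansion (and the remark that $R$ must stay bounded away from $0$) is slightly more detailed than the paper's one-line appeal to Taylor expansion, but it is the same argument.
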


\begin{remark}
Note that the expression $S_j$ for $j=1,\ldots,n$ is a polynomial in $\varepsilon$ with degree at most $k+N-1$, and without constant term. So, after carrying Taylor expansion of expressions in \eqref{equ3-0-1} around $\varepsilon=0$, all the functions $F_{j_1,j_2}(\theta,R,X_3,\ldots,X_n)$, $j_1=1,2,\ldots,k$, $j_2=1,3,\ldots,n$ are rational functions in the variables $\theta,R,X_3,\ldots,X_n$. Moreover, the denominator of $F_{j_1,j_2}(\theta,R,X_3,\ldots,X_n)$ is a monomial with only the variable $R$. As a result, each of the integrand equation in \eqref{equ2-4} (or \eqref{equ2-6}) is a polynomial in $\sin\theta$ and $\cos\theta$. Hence, the resulting averaged function $\boldsymbol{f}_i=(f_{i,1},f_{i,3},\ldots,f_{i,n})$, for $i=1,\ldots,k$, is a collection of rational functions in the variables $R,X_3,\ldots,X_n$. One can check that, for each of the rational function $f_{i,j_2}$, $j_2=1,3,\ldots,n$, its denominator is a monomial with only the variable $R$.
\end{remark}

We denote by $H_k(n,N)$ the exact maximum number of limit cycles of system \eqref{equ-2}, which can bifurcate from the origin up to the $k$th order averaging. The following result shows that $H_1(n,N)=2^{n-3}$.

\begin{theorem}\label{main-th-1}
For $k=1$ and $|\varepsilon|>0$ sufficiently small, there are systems of the form \eqref{equ-2} having exactly $\ell\in\{0,1,\ldots,2^{n-3}\}$ limit cycles bifurcating from the origin at $\varepsilon=0$.
\end{theorem}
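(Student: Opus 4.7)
The plan is to apply Lemma \ref{lem-main-1} to recast \eqref{equ-2} in the standard form \eqref{equ3-0-1}, and then to extract the algebraic structure of the first-order averaged function $\boldsymbol{f}_1(\boldsymbol{z})=\frac{1}{2\pi}\int_0^{2\pi}\boldsymbol{F}_1(\theta,\boldsymbol{z})\,d\theta$ by a direct $\theta$-integration. Expanding the right-hand side of \eqref{equ3-0-1} in powers of $\varepsilon$, the only terms contributing at order $\varepsilon^1$ are the $m=2$ summand from the quadratic part of the original vector field and the $j=1,\,m^*=1$ linear part of the perturbation. Combining this with the elementary fact that $\int_0^{2\pi}\cos^a\theta\sin^b\theta\,d\theta$ vanishes whenever either exponent is odd, I would read off that
\begin{equation}
f_{1,1}=\frac{R}{2b}\Bigl[\Lambda_0+\sum_{t=3}^{n}\Lambda_t X_t\Bigr],\qquad f_{1,s}=\alpha_s R^2+P_s(X_3,\ldots,X_n),\quad s=3,\ldots,n,
\end{equation}
where every $P_s$ is a polynomial of total degree at most two in $X_3,\ldots,X_n$ with vanishing constant term, and the scalars $\Lambda_0,\Lambda_t,\alpha_s$ as well as the coefficients of the $P_s$ are independent linear functions of the free parameters of \eqref{equ-2}. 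The decisive structural fact is the absence of $R$- and $RX_t$-terms in each $f_{1,s}$: the variable $R$ appears only through $R^2$, producing an $R\mapsto -R$ symmetry that will halve the B\'ezout count below.

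For the upper bound $H_1(n,N)\leq 2^{n-3}$, I would discard the branch $R=0$ of $f_{1,1}=0$ (which does not correspond to a limit cycle) and use the remaining affine constraint $L(X):=\Lambda_0+\sum_{t=3}^n\Lambda_t X_t=0$ to eliminate, say, $X_n$ linearly. Setting $\rho:=R^2$, the surviving system $f_{1,s}=0$, $s=3,\ldots,n$, consists of $n-2$ equations in the $n-2$ variables $(\rho,X_3,\ldots,X_{n-1})$, each of the form $\alpha_s\rho+\widetilde{P}_s(X_3,\ldots,X_{n-1})=0$, linear in $\rho$ with no $\rho X_t$ cross terms. Solving one equation for $\rho$ and substituting into the others yields a square system of $n-3$ quadratic equations in $n-3$ variables, whose number of isolated complex zeros is at most $2^{n-3}$ by B\'ezout's theorem; each real zero with $\rho>0$ lifts to a unique positive $R$, hence, by Theorem \ref{averaging-thm}, to exactly one limit cycle of \eqref{equ-2}.

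For sharpness and the realization of every intermediate value $\ell$, I would select the free parameters of \eqref{equ-2} so that the reduced quadratic system decouples into $X_s^2=\beta_s$ for $s=3,\ldots,n-1$, producing $2^{n-3}$ real zeros when all $\beta_s>0$. The independent coefficients of the $P_s$ further allow $\rho=-\widetilde{P}_3(X)/\alpha_3$, up to a small linear perturbation, to take prescribed values at the $2^{n-3}$ sign patterns $X_s=\pm\sqrt{\beta_s}$; by choosing which of these values are positive, any number $\ell\in\{0,1,\ldots,2^{n-3}\}$ of the real zeros can be made to satisfy $\rho>0$. The Jacobian of $\boldsymbol{f}_1$ at each surviving zero is manifestly nonzero for the decoupled construction, so Theorem \ref{averaging-thm} yields exactly $\ell$ limit cycles.

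The step I expect to be the main obstacle is the last one: carrying out the parameter accounting to confirm that the constants $\Lambda_0,\Lambda_t,\alpha_s$ and every coefficient of each $P_s$ can be prescribed independently through admissible choices of the coefficients $a_{i_1,\ldots,i_n}$, $b_{i_1,\ldots,i_n}$, $c_{i_1,\ldots,i_n,s}$, $a_{1,i_1,\ldots,i_n}$, $b_{1,i_1,\ldots,i_n}$, and $c_{1,i_1,\ldots,i_n,s}$, and checking that the transitions between different values of $\ell$ occur through simple $\rho$-sign flips rather than through the coalescence of roots (which would destroy the Jacobian nondegeneracy needed to invoke Theorem \ref{averaging-thm}). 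Once this bookkeeping is in place, the remainder reduces to routine trigonometric integration and standard B\'ezout counting.
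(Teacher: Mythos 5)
Your proposal follows essentially the same route as the paper's proof in Appendix \ref{A-B}: reduce to the standard form via Lemma \ref{lem-main-1}, compute the first-order averaged functions explicitly (obtaining exactly the structure you describe, namely $f_{1,1}$ equal to $R$ times an affine function of $X_3,\ldots,X_n$ and each $f_{1,s}$ quadratic in $(R^2,X_3,\ldots,X_n)$ with no constant term), bound the number of admissible zeros by $2^{n-3}$ via elimination and B\'ezout, and realize every $\ell$ by an explicit choice of the free coefficients. The only cosmetic differences are the order of elimination (the paper solves the $s=3$ equation for $R^2$ and keeps the linear equation in the $X_j$, while you eliminate $X_n$ first) and the sharpness construction (the paper uses a triangular, inductive choice of coefficients and deduces simplicity of all zeros from the fact that the B\'ezout bound is attained, rather than your decoupled $X_s^2=\beta_s$ system with sign control of $\rho$), neither of which changes the substance; the coefficient bookkeeping you flag as the main obstacle is handled in the paper at the same level of detail ("choosing appropriately the values of the coefficients").
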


\begin{remark}\label{remk-th-1}
Theorem \ref{main-th-1} is a generalized version of Corollary 2 in \cite{JLXZ09}, its proof can be done by using similar techniques as in \cite{JLXZ09}. We put the detailed proof in Appendix \ref{A-B} for the completeness. Up to now and as far as we know, Theorem \ref{main-th-1} is the first result about the limit cycles on the general class perturbations of system \eqref{equ-2}. The number of limit cycles of system \eqref{equ-2} for the first-order averaging does not depend on $N$. For the study of special forms of system \eqref{equ-2}, we refer the reader to \cite{JLAM2016,BLM2016,EJL17,JA20}. Several studies on homogeneous perturbations can be found in \cite{JLXZ09,BLV18}.
\end{remark}

Let $\boldsymbol{\eta}=(R,X_3,\ldots,X_n)$, and $\boldsymbol{f}_k(\boldsymbol{\eta})=(f_{k,1}(\boldsymbol{\eta}),f_{k,3}(\boldsymbol{\eta}),\ldots,$
$f_{k,n}(\boldsymbol{\eta}))$ be the $k$th-order averaged functions associated to system \eqref{equ3-0-1}. In the following we provide an algorithmic approach for determining the number $H_k(n,N)$ by using methods of symbolic computation. Denote the Jacobian of the function $\boldsymbol{f}_k(\boldsymbol{\eta})$ by $J_{\boldsymbol{f}_k}(\boldsymbol{\eta})$. That is,
\begin{equation}\label{equ3-0-3}
\begin{split}
J_{\boldsymbol{f}_k}(\boldsymbol{\eta})=
\left[
 \begin{matrix}
   \frac{\partial f_{k,1}}{\partial R} & \frac{\partial f_{k,1}}{\partial X_3} & \cdots& \frac{\partial f_{k,1}}{\partial X_n} \\
   \frac{\partial f_{k,3}}{\partial R} & \frac{\partial f_{k,3}}{\partial X_3} & \cdots& \frac{\partial f_{k,3}}{\partial X_n} \\
   \vdots& \vdots & \ddots & \vdots \\
   \frac{\partial f_{k,n}}{\partial R} & \frac{\partial f_{k,n}}{\partial X_3} & \cdots& \frac{\partial f_{k,n}}{\partial X_n} \\
  \end{matrix}
  \right].\nonumber
\end{split}
\end{equation}

Let $D_k(\boldsymbol{\eta})$ be the determinate of the Jacobian $J_{\boldsymbol{f}_k}(\boldsymbol{\eta})$. The following result provides sufficient conditions for system \eqref{equ-2} to have exactly $\ell$ limit cycles bifurcating from the origin.

\begin{theorem}\label{semi-averaging}
For $|\varepsilon|>0$ sufficiently small, system \eqref{equ-2} up to the $k$th-order averaging has exactly $\ell$ limit cycles bifurcating from the origin if the following semi-algebraic system
\begin{equation}\label{equ3-0-4}
\begin{split}
\left\{
\begin{array}{ll}
&\bar{f}_{k,1}(\boldsymbol{\eta},\boldsymbol{\mu})=\bar{f}_{k,3}(\boldsymbol{\eta},\boldsymbol{\mu})
=\cdots=\bar{f}_{k,n}(\boldsymbol{\eta},\boldsymbol{\mu})=0, \\
&R>0,\quad \bar{D}_{k}(\boldsymbol{\eta},\boldsymbol{\mu})\neq0, \quad b\neq0
\end{array}
\right.
\end{split}
\end{equation}
has exactly $\ell$ distinct real solutions with respective to the variables $\boldsymbol{\eta}$. Here $\bar{f}_{k,j}(\boldsymbol{\eta},\boldsymbol{\mu})$ $(j=1,3,\ldots,n)$, and $\bar{D}_{k}(\boldsymbol{\eta},\boldsymbol{\mu})$ are respectively the numerator of the functions $f_{k,j}(\boldsymbol{\eta})$ and $D_k(\boldsymbol{\eta})$, with $\boldsymbol{\mu}=(\mu_1,\ldots,\mu_p)$ are parameters appearing in the averaged functions.
\end{theorem}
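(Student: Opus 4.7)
The plan is to combine Lemma \ref{lem-main-1} with Theorem \ref{averaging-thm} and observe that, under the polar normalization of Lemma \ref{lem-main-1}, the hypotheses of Theorem \ref{averaging-thm} admit a purely polynomial rephrasing which is precisely \eqref{equ3-0-4}.

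First, I would invoke Lemma \ref{lem-main-1} to rewrite \eqref{equ-2} as the $2\pi$-periodic standard-form system \eqref{equ2-1} in the coordinates $(R,X_3,\ldots,X_n)$, and then apply the recurrence \eqref{equ2-4} to obtain the $k$th-order averaged function $\boldsymbol{f}_k(\boldsymbol{\eta})=(f_{k,1},f_{k,3},\ldots,f_{k,n})$. By the remark following Lemma \ref{lem-main-1}, each $f_{k,j}$ is a rational function in $\boldsymbol{\eta}$ whose denominator is a monomial in $R$ alone; the same structural property is inherited by every entry of $J_{\boldsymbol{f}_k}$, and hence by $D_k(\boldsymbol{\eta})$. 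Writing $\bar{f}_{k,j}$ and $\bar{D}_k$ for the corresponding numerators, these are bona fide polynomials in $(\boldsymbol{\eta},\boldsymbol{\mu})$, and on the semi-algebraic set $\{R>0\}$ the equalities $f_{k,j}=0$ (respectively $D_k\neq 0$) are equivalent to $\bar{f}_{k,j}=0$ (respectively $\bar{D}_k\neq 0$) with no spurious roots introduced.

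Next, I would match each hypothesis of Theorem \ref{averaging-thm} with the corresponding algebraic condition in \eqref{equ3-0-4}. The smoothness assumption (1) is automatic on $\{R>0\}$ since the $F_{i,j}$ are rational in $R$ and trigonometric in $\theta$; the nondegeneracy assumption (3) translates to $\bar{D}_k(\boldsymbol{\eta},\boldsymbol{\mu})\neq 0$; and the vanishing condition $\boldsymbol{f}_k(\boldsymbol{z}^*)=0$ becomes the polynomial system $\bar{f}_{k,1}=\bar{f}_{k,3}=\cdots=\bar{f}_{k,n}=0$. The restriction $R>0$ is forced by the polar change of variables, which requires a positive radius in order to correspond to a genuine periodic orbit of \eqref{equ-2} encircling the origin in the $(x_1,x_2)$-plane, while $b\neq 0$ is inherited from the definition of a zero-Hopf equilibrium. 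Applying Theorem \ref{averaging-thm} with $j=k$ at each simple real solution $\boldsymbol{z}^*$ of \eqref{equ3-0-4} then yields a $2\pi$-periodic solution whose initial condition tends to $\boldsymbol{z}^*$ as $\varepsilon\to 0$; undoing the coordinate change of Lemma \ref{lem-main-1} produces one limit cycle of \eqref{equ-2} bifurcating from the origin, and distinct solutions $\boldsymbol{z}^*$ give geometrically distinct limit cycles for $|\varepsilon|$ small enough, so the count of exactly $\ell$ is preserved.

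The main obstacle, and the reason the statement is phrased as a sufficient condition, is the implicit hypothesis (2) of Theorem \ref{averaging-thm}: $\boldsymbol{f}_1,\ldots,\boldsymbol{f}_{k-1}$ must vanish identically on the parameter stratum under consideration. In practice this is enforced by choosing $\boldsymbol{\mu}$ so that the lower-order averaged functions are zero as polynomials in $\boldsymbol{\eta}$, and it is exactly this choice that makes $\boldsymbol{f}_k$ the first nontrivial contributor; within such a stratum the argument above applies verbatim. A secondary technical point to verify is that the polar chart does not identify distinct zeros $\boldsymbol{z}^*$ of \eqref{equ3-0-4} with a single limit cycle, which follows from the injectivity of the map $(R,\theta,X_3,\ldots,X_n)\mapsto(x_1,x_2,x_3,\ldots,x_n)$ on $\{R>0,\ \theta\in[0,2\pi)\}$.
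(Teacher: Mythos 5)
Your proposal is correct and follows essentially the same route as the paper: pass to the standard form of Lemma \ref{lem-main-1}, observe that on $\{R>0\}$ the vanishing of the numerators $\bar{f}_{k,j}$ and the nonvanishing of $\bar{D}_k$ are equivalent to the corresponding conditions on $f_{k,j}$ and $D_k$, and then apply Theorem \ref{averaging-thm} at each of the $\ell$ nondegenerate real solutions to obtain $\ell$ distinct limit cycles. The paper's own proof is terser (it does not spell out hypothesis (2) on the vanishing of the lower-order averaged functions or the injectivity of the polar chart), so your added detail is a faithful elaboration rather than a different argument.
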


\begin{proof}
Assume that system \eqref{equ3-0-4} has $\ell$ distinct real solutions, which can be written as $\boldsymbol{\alpha}_1=(R^{(1)},X_3^{(1)},\ldots,X_n^{(1)})$, $\boldsymbol{\alpha}_2=(R^{(2)},X_3^{(2)},\ldots,X_n^{(2)}), \ldots, \boldsymbol{\alpha}_{\ell}=(R^{(\ell)},X_3^{(\ell)},\ldots,X_n^{(\ell)})$. Note that, for each solution $\boldsymbol{\alpha}_j$, $\bar{D}_{k}(\boldsymbol{\alpha}_j)\neq0$ implies that $D_k(\boldsymbol{\eta})\neq0$ in a suitable small neighborhood of $\boldsymbol{\eta}=(0,0,\ldots,0)$ with $R>0$. From Theorem \ref{averaging-thm}, we know that for $|\varepsilon|>0$ sufficiently small, system \eqref{equ-2} has a $T$-periodic solution $\boldsymbol{x}(t,\boldsymbol{\alpha}_j,\varepsilon)$ such that $\boldsymbol{x}(0,\boldsymbol{\alpha}_j,\varepsilon)\rightarrow\boldsymbol{\alpha}_j$ when $\varepsilon\rightarrow0$. This completes the proof.
\end{proof}

\begin{remark}\label{rem-semi-1}
Theorem \ref{semi-averaging} provides an effective and straightforward computation method to verify whether an obtained bound ($\ell$) for the number of limit cycles of a given differential system can be reached. Its main task is to find conditions on the parameters $\boldsymbol{\mu}$ for system \eqref{equ3-0-4} to have exactly $\ell$ distinct real solutions. In next section, we will give a systematic approach for solving semi-algebraic systems and analyzing zero-Hopf bifurcation of limit cycles by using symbolic computation methods automatically. Remark that, similar to system \eqref{equ3-0-4}, the semi-algebraic system for system \eqref{equ-2} to have a prescribed number of stable limit cycles bifurcating from the origin may also be formulated by using Routh--Hurwitz's stability criterion.
\end{remark}

Next we recall, from sparse elimination theory, the BKK bound for Bernstein, Khovanskii, and Kushnirenko \cite{Ber75,Kus76,Kho77}. The BKK bound counts the number of common complex solutions of a generic polynomial system. As a result, the BKK bound of the polynomial system $\{\bar{f}_{k,1}(\boldsymbol{\eta},\boldsymbol{\mu}),\bar{f}_{k,3}(\boldsymbol{\eta},\boldsymbol{\mu})
,\ldots,$ $\bar{f}_{k,n}(\boldsymbol{\eta},\boldsymbol{\mu})\}$ provides a bound for the maximum number of limit cycles that can bifurcate from the origin up to the $k$th-order averaging. More detailed information on the Bernstein's theorem and an expository account of recent work in this area can be found in \cite[Section 7]{DJD2004}. In Table \ref{Tab-1}, we present some values of the BKK bound of $H_2(n,N)$. The results on the mixed volume were computed by using Emiris and Canny's algorithm \cite{EC95}, for which the software is available at https://github.com/iemiris/MixedVolume-SparseResultants. There are two other software packages, {\sf MixedVol-2.0} \cite{LeeL11} and {\sf PHCpack} in Macaulay2 \cite{GPV13}, which can be used for computing mixed volumes of polynomial systems.

\begin{table}[h]
\begin{center}
\caption{Some values of the BKK bounds of $H_2(n,N)$.}\label{Tab-1}
\begin{tabular}{|c|c|c||c||c||c||c||c|}
  \hline
  \multicolumn{8}{|c|}{$n$}  \\
  \hline \hline
  & & 3 & 4 & 5 & 6 & 7 & $\cdots$\\
  \hline \hline
  \multirow{7}{*}{$N$}
  & 2 & 3 &  9 & 27 & 81 & 243 & \\
  & 3 & 3 &  9 & 27 & 81 & 243 & \\
  & 4 & 3 &  9 & 27 & 81 & 243 & \\
  & 5 & 3 & 9 & 27 & 81 & 243 & \\
  & 6 & 3 & 9 & 27 &  81 & 243 & \\
  & 7 & 3 & 9 & 27 & 81 & 243 & \\
  & $\vdots$ &  &  &  &  &  & $\ddots$\\
  \hline
\end{tabular}
\end{center}
\end{table}

\begin{remark}\label{rem-bkk}
By observing the values of the BKK bounds of $H_2(n,N)$, we conjecture that
the BKK bound of the resulting polynomial system obtained from the averaged functions is independent of $N$. Moreover, we believe that the values of the BKK bound $=H_2(n,N)=3^{n-2}$. In Section \ref{sect4.1}, we will provide a concrete example to show that $H_2(3,3)=3$ by using our algorithmic approach. Similar results may also be established for other specific values of $n$ and $N$. How to provide rigorous proof of ``the BKK bound $=H_2(n,N)=3^{n-2}$'' for general $n$ and $N$ is an interesting question that remains for further investigation.
\end{remark}

\section{Algorithmic Analysis of Limit-Cycle Bifurcation}\label{sect3}

The process of using the averaging method for analyzing limit cycles of a given differential system can be divided into three steps \cite[Section 4]{hy19}. A minor modified version is stated as follows.
\begin{description}
  \item[Step 1:] Write a perturbed system of the form \eqref{equ-2} in the standard form of averaging \eqref{equ2-1} up to the $k$th-order in $\varepsilon$.
  \item[Step 2:] (a) Compute the exact formula of the $k$th-order integral function $\boldsymbol{y}_k(t,\boldsymbol{z})$ in \eqref{equ2-6}; (b) Derive the symbolic expression of the $k$th-order averaged function $\boldsymbol{f}_k(\boldsymbol{z})$ by \eqref{equ2-3}.
  \item[Step 3:] Determine the number $H_k(n,N)$ (i.e., the exact upper bound for the number of real isolated solutions of $\boldsymbol{f}_k(\boldsymbol{z})$).

\end{description}

We develop a symbolic program using Maple to implement the first two steps. The Maple program is mainly based on Theorem \ref{thm2-a}, $L$-multilinear map \eqref{equ2-2}, and Lemma \ref{lem-main-1}. The outline of the Maple program is presented in Section \ref{sect3.1}. For {\bf Step 3}, we first compute the mixed volume of a polynomial system (derived from the Maple program) to obtain an upper bound for the number of limit cycles, then we use Theorem \ref{semi-averaging} to check whether the obtained bound is reached. An algorithmic derivation for Theorem \ref{semi-averaging} is given in Section \ref{sect3.2}.

\subsection{Outline of Symbolic Maple Program}\label{sect3.1}

In this subsection, for the convenience of readers, we list the outline of symbolic Maple program developed in this paper, which can be used for computing the higher-order averaged functions of nonlinear differential systems. The program has a modular structure, and is formed by a main process together with two auxiliary functions. The source code of the Maple program is available at{\small{ {\textcolor{blue}{\underline{\url{https://github.com/Bo-Math/zero-Hopf}}}}}}.

The first auxiliary function, \textsc{StandardForm}$(S,k)$, is a direct implementation of the formula derivation in the proof of Lemma \ref{lem-main-1}, where $S=[\dot{x}_1,\dot{x}_2,\ldots,\dot{x}_n]$ denotes a differential system of the form \eqref{equ-2}. This function transforms a given differential system denoted by $S$ into the standard form of averaging $\emph{SF}=[dR/d\theta,dX_3/d\theta,\ldots,dX_n/d\theta]$.

The second auxiliary function, \textsc{OrderKFormula}$(k,n)$, computes formula of the $k$th-order integral function $\boldsymbol{y}_k(\theta,\boldsymbol{z})$ in \eqref{equ2-6}. Correctness of it follows from Theorem \ref{thm2-a}. We deduce explicitly the formulae of $\boldsymbol{y}_k$'s up to $k=3$ in Appendix \ref{sect-Algo}. The times of computation using the Maple program for some examples are given in Table \ref{Tab-A}.

The main process of the program, \textsc{AveragedFunctions}$(\emph{SF},k)$, is based on the functions \textsc{StandardForm}$(S,k)$ and \textsc{OrderKFormula}$(k,n)$, which provides a straightforward calculation method to derive the exact expression of the $k$th-order averaged functions for a given differential system. We remark that the program requires quite heavy computation, which grows notably when one of the averaging order $k$, degree $N$, or dimension $n$ of the required systems increases. Because of this, each time we compute the $k$th-order averaged function, we will update the obtained standard form $\emph{SF}$ by using the conditions of $\boldsymbol{f}_1\equiv\boldsymbol{f}_2\equiv\cdots\boldsymbol{f}_{k-1}=0$.

In Section \ref{sect4}, we will present several examples to demonstrate the applicability and the computational efficiency of the Maple program.

\subsection{Algorithmic Derivation for Theorem \ref{semi-averaging}}\label{sect3.2}

Our purpose is to derive sufficient conditions on the parameters for a given differential system of the form \eqref{equ-2} to have a prescribed number of limit cycles bifurcating from the origin. In the following we propose a general algorithmic approach for automatically analyzing Theorem \ref{semi-averaging} by using methods from symbolic computation. This approach is based on the one for solving semi-algebraic systems proposed by Wang, Xia and Niu \cite{WDMXBC2005,niuwang08}. The main steps of our computational approach are summarized as follows.

\smallskip

{\bf STEP 1}. Based on the Maple program developed in Section
\ref{sect3.1}, formulate the semi-algebraic system \eqref{equ3-0-4} from a differential system of the form \eqref{equ-2}. Denote by $\mathcal{S}$ the semi-algebraic system for solving, $\Gamma$ the set of inequalities of $\mathcal{S}$, $\mathcal{F}$ the set of polynomials in $\Gamma$, and $\mathcal{P}$ the set of polynomials in the equations of $\mathcal{S}$.

\smallskip

{\bf STEP 2}. Triangularize the set $\mathcal{P}$ of
polynomials to obtain one or several (regular) triangular sets
$\mathcal{T}_{k}$ by using the method of triangular decomposition or
Gr\"obner bases.

\smallskip

{\bf STEP 3}. For each triangular set $\mathcal{T}_{k}$, use the
polynomial set $\mathcal{F}$ to compute an algebraic variety $V$ in
$\boldsymbol{\mu}$ by means of real solution classification (e.g.,
Yang--Xia's method \cite{YHX01,LYBX05} or Lazard--Rouillier's method
\cite{DLFR}), which decomposes the parameter space $\mathbb{R}^{p}$ into finitely many cells such that in each cell the number of real zeros of $\mathcal{T}_{k}$ and the signs of polynomials in $\mathcal{F}$ at these real zeros remain invariant. The algebraic variety is defined by polynomials in $\boldsymbol{\mu}$. Then take a rational sample point from each cell by using the method of PCAD or critical points \cite{msed2007}, and isolate the real zeros of $\mathcal{T}_{k}$ by rational intervals at this sample point. In this way, the number of real zeros of $\mathcal{T}_{k}$ and the signs of polynomials in $\mathcal{F}$ at these real zeros in each cell are determined.

\smallskip

{\bf STEP 4}. Determine the signs of (the factors of) the defining polynomials of $V$ at each sample point. Formulate the conditions on $\boldsymbol{\mu}$ according to the signs of these defining polynomials at the sample points in those cells in which the system $S$ has exactly the number of real solutions we want.

\smallskip

{\bf STEP 5}. Output the conditions on the parameter $\boldsymbol{\mu}$ such that the differential system has a prescribed number of limit cycles bifurcating from
the origin.

\smallskip

There are several packages or software for realization of certain steps in our approach. For example, the method of discriminant varieties of Lazard and Rouillier \cite{DLFR} (implemented as a Maple package {\sf DV} by Moroz and Rouillier), and the Maple package {\sf DISCOVERER} (see also recent improvements in the Maple package RegularChains[SemiAlgebraicSetTools]), developed by Xia, implements the methods of Yang and Xia \cite{LYBX05} for real solution classification. In Section \ref{sect4}, we will apply our general algorithmic approach to analyze zero-bifurcations for several concrete differential systems in order to show its feasibility.

\section{Experiments}\label{sect4}
In this section, we explain how to apply the algorithmic tests to the study of zero-Hopf bifurcations of polynomial differential systems and illustrate some of the computational steps by a famous jerk differential system. In addition, using our computational approach, we also analyze the zero-Hopf bifurcation of limit cycles for a class of generalized Lorenz systems and a 4D hyperchaotic differential system. The experimental results show the applicability and efficiency of our algorithmic approach. All the experiments were made in Maple 17 on a Windows 10 laptop with 4 CPUs 2.9GHz and 8192M RAM.

\subsection{Illustrative Example}\label{sect4.1}
We study the zero-Hopf bifurcation of the 3D jerk system:
\begin{equation}\label{equ3-7-1}
\begin{split}
\dot{x}&=y,\quad \dot{y}=z,\\
\dot{z}&=-az-bx+cy+xy^2-x^3,
\end{split}
\end{equation}
where $a,b,c\in\mathbb{R}$. System \eqref{equ3-7-1} corresponds to a nonlinear third-order differential equation studied by Vaidyanathan \cite{SVai2017}, showing that this equation can exhibit a rich range of dynamical behavior. As shown
by \cite[Proposition 1]{FBAM21} that, the origin is a zero-Hopf equilibrium when $a=b=0$ and $c<0$. Here, using the second-order averaging method, we restudy the limit cycles that can bifurcate from the origin of the jerk system \eqref{equ3-7-1}. To do this, consider the vector $(a,b,c)$ given by
\begin{equation}\label{equ3-7-1-1}
\begin{split}
a&=\varepsilon a_1+\varepsilon^2 a_2,\quad
b=\varepsilon b_1+\varepsilon^2 b_2,\\
c&=-\beta^2+\varepsilon c_1+\varepsilon^2 c_2,\quad \beta\neq0,
\end{split}
\end{equation}
where the constants $a_i$, $b_i$ and $c_i$ are all real parameters. Then the jerk system becomes
\begin{equation}\label{equ3-7-2}
\begin{split}
\dot{x}&=y,\quad \dot{y}=z,\\
\dot{z}&=-(\varepsilon a_1+\varepsilon^2 a_2)z-(\varepsilon b_1+\varepsilon^2b_2)x\\
&\quad+(-\beta^2+\varepsilon c_1+\varepsilon^2 c_2)y+xy^2-x^3.
\end{split}
\end{equation}

Our main result on the limit cycles of system \eqref{equ3-7-2} is the following. Its proof can be found in Appendix \ref{sect-chun}.

\begin{theorem}\label{main-2}
The following statements hold for $|\varepsilon|>0$ sufficiently small.
\begin{itemize}
\item [(i)] The first-order averaging does not provide any
information about limit cycles that bifurcate from the origin.
\item [(ii)] System \eqref{equ3-7-2} has, up to the second-order averaging, at most 3 limit cycles bifurcating from the origin, and this number can be reached if one of the following 2 conditions holds:
\begin{equation}\label{equ3-7-3}
\begin{split}
\mathcal{C}_0&=[R_1<0,\,R_2<0,\,0<R_3,\,0<R_4]\wedge\bar{\mathcal{C}},\\ \mathcal{C}_1&=[0<R_1,\,0<R_2,\,0<R_3,\,R_4<0]\wedge\bar{\mathcal{C}},
\end{split}
\end{equation}
where
\begin{equation}\label{equ3-7-3-1}
\begin{split}
R_1&=\beta^2-3,\quad R_2=\beta^2a_2+2b_2,\\
R_3&=2\beta^2a_2-b_2,\quad R_4=\beta^2a_2-b_2,\\
\bar{\mathcal{C}}&=[\beta\neq0,R_1\neq0,R_2\neq0,R_3\neq0,R_4\neq0].\nonumber
\end{split}
\end{equation}

\end{itemize}
\end{theorem}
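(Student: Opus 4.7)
The proof follows the algorithmic pipeline of Section~\ref{sect3}. I first bring \eqref{equ3-7-2} into the standard averaging form \eqref{equ3-0-1}: apply a linear change of variables that diagonalizes the unperturbed linear part (whose eigenvalues are $0$ and $\pm i\beta$) into a rotational $2\times 2$ block of frequency $\beta$ and a scalar zero block; then introduce polar--cylindrical coordinates $x_1=R\cos\theta,\ x_2=R\sin\theta,\ x_3=X_3$, rescale $R\mapsto\varepsilon R,\ X_3\mapsto\varepsilon X_3$, and divide by $\mathrm d\theta/\mathrm dt$ as in the proof of Lemma~\ref{lem-main-1}. This yields trigonometric-polynomial vector fields $\boldsymbol F_1,\boldsymbol F_2$ whose Fourier structure is driven by the cubic nonlinearities $xy^2-x^3$ and the linear perturbations prescribed by~\eqref{equ3-7-1-1}.

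For part~(i), I run the Maple routine of Section~\ref{sect3.1} on $\boldsymbol F_1$ and integrate over $\theta\in[0,2\pi]$. After the $\varepsilon$-rescaling only the linear coefficients $a_1,b_1,c_1$ survive at first order, and in the rotated frame they multiply pure sines and cosines; consequently $\boldsymbol f_1(R,X_3)\equiv\boldsymbol 0$, which is the content of~(i). For the upper bound in part~(ii), the vanishing $\boldsymbol f_1\equiv 0$ collapses the recurrence \eqref{equ2-6} at $i=2$ to $\boldsymbol y_2 = 2\int_0^\theta\bigl(\boldsymbol F_2 + \partial\boldsymbol F_1\,\boldsymbol y_1\bigr)\mathrm ds$, so $\boldsymbol f_2 = \boldsymbol y_2(2\pi,\cdot)/2$. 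After clearing the monomial denominator in $R$ (cf.\ the Remark following Lemma~\ref{lem-main-1}), this produces a polynomial system $\bar f_{2,1}(R,X_3) = \bar f_{2,3}(R,X_3) = 0$ whose BKK mixed-volume bound, by Table~\ref{Tab-1}, is $H_2(3,3)\le 3$. This gives the ``at most three limit cycles'' assertion.

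Sharpness is verified by applying Theorem~\ref{semi-averaging} together with the five-step classification scheme of Section~\ref{sect3.2} to the semi-algebraic system
\[
\bar f_{2,1} = \bar f_{2,3} = 0,\qquad R>0,\quad \bar D_2 \neq 0,\quad \beta\neq 0.
\]
I triangularize in the order $R\prec X_3$ and then run Yang--Xia real solution classification on the parameter space of $(a_1,b_1,c_1,a_2,b_2,c_2,\beta)$ via \textsf{DISCOVERER} or \textsf{RegularChains}. I expect the first-order parameters $a_1,b_1,c_1$ and the second-order parameter $c_2$ to drop out of the border polynomials, so that the discriminant of the triangular set factors through $R_1,R_2,R_3,R_4$ listed in~\eqref{equ3-7-3-1}. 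The cells on which exactly three distinct positive real solutions persist should then be precisely the sign patterns $\mathcal C_0$ and $\mathcal C_1$ of~\eqref{equ3-7-3}; picking a rational sample point in each such cell and isolating the three real positive roots of the triangular system certifies that the bound $3$ is reached.

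The principal obstacle is twofold. First, the explicit symbolic computation and simplification of $\boldsymbol f_2$ requires trigonometric integration of Fr\'echet-type contractions $\partial\boldsymbol F_1\,\boldsymbol y_1$ in the sense of \eqref{equ2-2}, and these produce bulky rational expressions in the seven parameters that must be carefully collected to expose the dependence on $\beta^2,a_2,b_2$ alone. Second, the real solution classification typically yields a discriminant variety with several components, including spurious branches where $\bar D_2=0$ or $R=0$, which must be pruned before the clean four-factor pattern $R_1,\ldots,R_4$ emerges. Once the Maple pipeline of Sections~\ref{sect3.1}--\ref{sect3.2} is fully invoked, both tasks reduce to bookkeeping, and the resulting conditions are precisely $\mathcal C_0$ and $\mathcal C_1$ as stated in Theorem~\ref{main-2}.
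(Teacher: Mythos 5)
Your overall pipeline for part (ii) --- Jordan normal form, second-order averaged functions, BKK bound $3$, then Theorem~\ref{semi-averaging} plus real solution classification to isolate the cells $\mathcal{C}_0$ and $\mathcal{C}_1$ --- matches the paper's proof. But your treatment of part (i) contains a genuine error that then undermines the logical passage to second order. The first-order averaged function is \emph{not} identically zero: carrying out the integration one finds
\begin{equation*}
f_{1,1}(R,X_3)=-\frac{\pi R\,(\beta^2 a_1-b_1)}{\beta^3},\qquad
f_{1,3}(R,X_3)=-\frac{2\pi b_1 X_3}{\beta^3},
\end{equation*}
because the perturbation coefficients $a_1,b_1$ enter the integrand multiplied by $\cos^2\theta$ and $\sin^2\theta$ (through the monomial factor $\Psi$), not by pure first harmonics. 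The correct reason statement (i) holds is that this system has no \emph{isolated} real zero with $R>0$: either $\beta^2a_1-b_1\neq 0$, in which case there is no solution with $R>0$, or $\beta^2a_1-b_1=0$ (together with $b_1=0$), in which case the zero set is a continuum, so hypothesis (3) of Theorem~\ref{averaging-thm} fails. ``No information'' is not the same as ``identically vanishing.''

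Consequently, before invoking second-order averaging you must \emph{impose} $\boldsymbol{f}_1\equiv 0$, which forces $a_1=b_1=0$; hypothesis (2) of Theorem~\ref{averaging-thm} requires this, and it is also why the final conditions in \eqref{equ3-7-3} involve only $\beta$, $a_2$, $b_2$. In your write-up the parameters $a_1,b_1$ are merely ``expected to drop out of the border polynomials'' --- they do not drop out on their own; they are eliminated by the constraint $a_1=b_1=0$, and the standard form must be updated under this constraint before $\boldsymbol{f}_2$ is computed (otherwise the term $\partial\boldsymbol{F}_1\,\boldsymbol{y}_1$ in the second-order recurrence carries spurious $a_1,b_1$ contributions). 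With that correction inserted, the remainder of your argument --- clearing the monomial denominator in $R$, the mixed-volume bound $3$, and the classification of the semi-algebraic system yielding exactly the sign conditions $\mathcal{C}_0$ and $\mathcal{C}_1$ --- coincides with the paper's proof.
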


\begin{remark}\label{remmark-jerk}
Theorem \ref{main-2} is consistent with statement (1) of \cite[Theorem 2]{FBAM21}. The condition $\bar{\mathcal{C}}$ is of type \textit{border polynomial} excluding some exceptional parameter values, provided that $\bar{\mathcal{C}}$ does not vanish. We prove Theorem \ref{main-2} by using the second-order averaging method. Since the second-order averaged functions cannot be identically zero, it follows that the higher-order of averaging cannot be applied to find eventually more limit cycles.
\end{remark}

\subsection{Other Models and Remarks}\label{sect4.2}

In order to save space, the details of our results on the zero-Hopf bifurcations for a class of generalized Lorenz systems and a four-dimensional hyperchaotic differential system are placed in Appendix \ref{sect-ex}.

\section{Conclusion}\label{sect5}
We develop a symbolic Maple program for computing the averaged functions of any order for nonlinear differential systems. With the aid of this program, we reduce the analysis of zero-Hopf bifurcations to the solution of a semi-algebraic system and introduce a systematical computational approach for rigorously analyzing the conditions on the parameters under which a considered differential system has a prescribed number of limit cycles bifurcating from a zero-Hopf equilibrium. The results of experiments we performed verify the effectiveness of the proposed approach.

It would be interesting to employ our algorithmic approach for analyzing zero-Hopf bifurcations of high-dimensional polynomial differential systems in many different fields, which are of high interest in nature sciences and engineering. Our program, originally developed for computing averaged functions of continuous differential systems, can also be generalized to study limit cycles for discontinuous differential systems. As our further work, it is of great interest to give rigorous proof of ``the BKK bound $=H_2(n,N)=3^{n-2}$'' for general $n$ and $N$ (see Remark \ref{rem-bkk}). Moreover, how to provide a good bound for the number of limit cycles of system \eqref{equ-2} up to $k$th-order averaging ($k\geq2$) is also worthy of further study.

\section*{Acknowledgments}
Huang's work is partially supported by the National Natural Science Foundation of China (NSFC 12101032 and NSFC 12131004). The author wishes to thank Professor Dongming Wang and Professor Deren Han for their support and encouragement.

\bibliographystyle{plain}
\bibliography{ref}

\appendix
\newpage

\section{Proof of Lemma \ref{lem-main-1}}\label{sec-A}

To apply the averaging method, we rescale the variables by setting
\begin{equation}\label{equA3-1}
\begin{split}
(x_1,x_2,\ldots,x_n)=\left(\varepsilon X_1,\varepsilon X_2,\ldots,\varepsilon X_n\right).
\end{split}
\end{equation}
Then system \eqref{equ-2} becomes
\begin{equation}\label{equA3-2}
\begin{split}
\dot{X}_1&=-bX_2+\sum_{m\geq2}^{N}\varepsilon^{m-1}\sum_{i_1+\cdots+i_n=m}a_{i_1,\ldots,i_n}X_1^{i_1}\cdots X_n^{i_n}\\
&+\sum_{j=1}^k\varepsilon^j\sum_{m^*\geq1}^{N}\varepsilon^{m^*-1}\sum_{i_1+\cdots+i_n=m^*}a_{j,i_1,\ldots,i_n}X_1^{i_1}\cdots X_n^{i_n},\\
\dot{X}_2&=bX_1+\sum_{m\geq2}^{N}\varepsilon^{m-1}\sum_{i_1+\cdots+i_n=m}b_{i_1,\ldots,i_n}X_1^{i_1}\cdots X_n^{i_n}\\
&+\sum_{j=1}^k\varepsilon^j\sum_{m^*\geq1}^{N}\varepsilon^{m^*-1}\sum_{i_1+\cdots+i_n=m^*}b_{j,i_1,\ldots,i_n}X_1^{i_1}\cdots X_n^{i_n},\\
\dot{X}_s&=\sum_{m\geq2}^{N}\varepsilon^{m-1}\sum_{i_1+\cdots+i_n=m}c_{i_1,\ldots,i_n,s}X_1^{i_1}\cdots X_n^{i_n}\\
&+\sum_{j=1}^k\varepsilon^j\sum_{m^*\geq1}^{N}\varepsilon^{m^*-1}\sum_{i_1+\cdots+i_n=m^*}c_{j,i_1,\ldots,i_n,s}X_1^{i_1}\cdots X_n^{i_n}.
\end{split}
\end{equation}
Making the change of variables
\begin{equation}\label{equA3-3}
\begin{split}
X_1=R\cos\theta,\quad X_2=R\sin\theta,\quad X_s=X_s,\quad s=3,\ldots,n\nonumber
\end{split}
\end{equation}
with $R>0$, system \eqref{equA3-2} becomes
\begin{equation}\label{equA3-4}
\begin{split}
\frac{dR}{dt}&=\cos\theta\cdot S_{1}(\theta,R,X_3,\ldots,X_n,\varepsilon)+\sin\theta\cdot S_{2}(\theta,R,X_3,\ldots,X_n,\varepsilon),\\
\frac{d\theta}{dt}&=b+\frac{\cos\theta}{R}\cdot S_{2}(\theta,R,X_3,\ldots,X_n,\varepsilon)\\
&\quad-\frac{\sin\theta}{R}\cdot S_{1}(\theta,R,X_3,\ldots,X_n,\varepsilon),\\
\frac{dX_s}{dt}&=S_{s}(\theta,R,X_3,\ldots,X_n,\varepsilon),\quad s=3,\ldots,n,
\end{split}
\end{equation}
where
\begin{align*}
&S_{1}=\sum_{m\geq2}^N\varepsilon^{m-1}\sum_{i_1+\cdots+i_n=m}a_{i_1,\ldots,i_n}\Psi(\theta,R,X_3,\ldots,X_n)\\
&+\sum_{j=1}^k\varepsilon^j\sum_{m^*\geq1}^N\varepsilon^{m^*-1}\sum_{i_1+\cdots+i_n=m^*}a_{j,i_1,\ldots,i_n}\Psi(\theta,R,X_3,\ldots,X_n),\\
&S_{2}=\sum_{m\geq2}^N\varepsilon^{m-1}\sum_{i_1+\cdots+i_n=m}b_{i_1,\ldots,i_n}\Psi(\theta,R,X_3,\ldots,X_n)\\
&+\sum_{j=1}^k\varepsilon^j\sum_{m^*\geq1}^N\varepsilon^{m^*-1}\sum_{i_1+\cdots+i_n=m^*}b_{j,i_1,\ldots,i_n}\Psi(\theta,R,X_3,\ldots,X_n),\\
&S_{s}=\sum_{m\geq2}^N\varepsilon^{m-1}\sum_{i_1+\cdots+i_n=m}c_{i_1,\ldots,i_n,s}\Psi(\theta,R,X_3,\ldots,X_n)\\
&+\sum_{j=1}^k\varepsilon^j\sum_{m^*\geq1}^N\varepsilon^{m^*-1}
\sum_{i_1+\cdots+i_n=m^*}c_{j,i_1,\ldots,i_n,s}\Psi(\theta,R,X_3,\ldots,X_n)
\end{align*}
with $\Psi(\theta,R,X_3,\ldots,X_n)=(R\cos\theta)^{i_1}(R\cos\theta)^{i_2}X_3^{i_3}\cdots X_n^{i_n}$.

Since $b\neq0$, one can easily verify that in a suitable small neighborhood of $(R,X_3,\ldots,X_n)=(0,0,\ldots,0)$ with $R>0$ we always have $d \theta/dt\neq0$. Then taking $\theta$ as the new independent variable, in a neighborhood of $(R,X_3,\ldots,X_n)=(0,0,\ldots,0)$, system \eqref{equA3-4} becomes
\begin{equation}\label{equA3-6}
\begin{split}
\frac{dR}{d\theta}&=\frac{R\big[\cos\theta\cdot S_{1}(\theta,R,X_3,\ldots,X_n,\varepsilon)+\sin\theta\cdot S_{2}(\theta,R,X_3,\ldots,X_n,\varepsilon)\big]}{bR+\cos\theta\cdot S_{2}(\theta,R,X_3,\ldots,X_n,\varepsilon)-\sin\theta\cdot S_{1}(\theta,R,X_3,\ldots,X_n,\varepsilon)},\\
\frac{dX_s}{d\theta}&=\frac{R\cdot S_{s}(\theta,R,X_3,\ldots,X_n,\varepsilon)}{bR+\cos\theta\cdot S_{2}(\theta,R,X_3,\ldots,X_n,\varepsilon)-\sin\theta\cdot S_{1}(\theta,R,X_3,\ldots,X_n,\varepsilon)}.
\end{split}
\end{equation}
By carrying Taylor expansion of expressions in \eqref{equA3-5} with respective to the variable $\varepsilon$ around $\varepsilon=0$, one obtains the functions $F_{j_1,j_2}(\theta,R,X_3,\ldots,X_n)$, $j_1=1,2,\ldots,k$, $j_2=1,3,\ldots,n$ in \eqref{equ3-0-1}. This ends the proof of Lemma \ref{lem-main-1}.

\section{Proof of Theorem \ref{main-th-1}}\label{A-B}

Fixing $k=1$ and applying Lemma \ref{lem-main-1} to system \eqref{equ-2}, we obtain the following differential system
\begin{equation}\label{equ3-1-5}
\begin{split}
\frac{dR}{d\theta}&=\varepsilon F_1(\theta,R,X_3,\ldots,X_n)+\mathcal{O}(\varepsilon^2),\\
\frac{dX_s}{d\theta}&=\varepsilon F_s(\theta,R,X_3,\ldots,X_n)+\mathcal{O}(\varepsilon^2),\quad s=3,\ldots,n,
\end{split}
\end{equation}
where
\begin{equation}\label{equ3-1-6}
\begin{split}
F_1&=\frac{1}{b}\Big(\sum_{i_1+\cdots+i_n=2}(a_{i_1,i_2,\ldots,i_n}\cos\theta+b_{i_1,i_2,\ldots,i_n}\sin\theta)\times
\Psi(\theta,R,X_3,\ldots,X_n)\\
&+\sum_{i_1+\cdots+i_n=1}(a_{1,i_1,i_2,\ldots,i_n}\cos\theta
+b_{1,i_1,i_2,\ldots,i_n}\sin\theta)\times
\Psi(\theta,R,X_3,\ldots,X_n)\Big),\\
F_s&=\frac{1}{b}\Big(\sum_{i_1+\cdots+i_n=2} c_{i_1,i_2,\ldots,i_n,s}\times\Psi(\theta,R,X_3,\ldots,X_n)\\
&\quad+\sum_{i_1+\cdots+i_n=1} c_{1,i_1,i_2,\ldots,i_n,s}\times\Psi(\theta,R,X_3,\ldots,X_n)\Big).\nonumber
\end{split}
\end{equation}
Here $\Psi(\theta,R,X_3,\ldots,X_n)=(R\cos\theta)^{i_1}(R\cos\theta)^{i_2}X_3^{i_3}\cdots X_n^{i_n}$. Now system \eqref{equ3-1-5} comes into the standard form \eqref{equ2-1} for applying the averaging method with $\boldsymbol{x}=\boldsymbol{\eta}=(R,X_3,\ldots,X_n)$, $t=\theta$, $T=2\pi$, and
\[\boldsymbol{F}(\theta,\boldsymbol{\eta})=(F_1(\theta,\boldsymbol{\eta}),F_3(\theta,\boldsymbol{\eta}),
\ldots,F_n(\theta,\boldsymbol{\eta})).\]
According to \eqref{equ2-3}, we need to compute the first-order averaged functions:
\begin{equation}\label{equ3-1-7}
\begin{split}
f_{1,s}(\boldsymbol{\eta})=\int_0^{2\pi}F_s(\theta,\boldsymbol{\eta})d\theta,\quad s=1,3,\ldots,n.
\end{split}
\end{equation}
After some calculations we obtain
\begin{small}
\begin{equation}\label{equ3-1-8}
\begin{split}
f_{1,1}(\boldsymbol{\eta})&=\frac{\pi R}{b}\Big(a_{1,1,0,\boldsymbol{0}_{n-2}}
+b_{1,0,1,\boldsymbol{0}_{n-2}}+\sum_{j=3}^n(a_{1,0,\boldsymbol{e}_j}
+b_{0,1,\boldsymbol{e}_j})X_j\Big),\\
f_{1,s}(\boldsymbol{\eta})&=\frac{\pi}{b}\Big((c_{2,0,\boldsymbol{0}_{n-2},s}
+c_{0,2,\boldsymbol{0}_{n-2},s})R^2+2\sum_{j=3}^nc_{1,0,0,\boldsymbol{e}_j,s}X_j\\
&\quad+2\sum_{3\leq j_1\leq j_2\leq n}c_{0,0,\boldsymbol{e}_{j_1j_2},s}X_{j_1}X_{j_2}
\Big),\quad s=3,\ldots,n,
\end{split}
\end{equation}
\end{small}
where $\boldsymbol{e}_{j}\in\mathbb{N}^{n-2}$ is the unit vector with $j$th entry equal to 1 ($\mathbb{N}$ denotes the set of all nonnegative integers), and $\boldsymbol{e}_{j_1j_2}\in\mathbb{N}^{n-2}$ has the sum of the $j_1$th and $j_2$th entries equal to $2$ and the other equal to 0.

Using the averaging theorem to study limit cycles of system \eqref{equ3-1-5}, we need to know the number of simple zeros of system \eqref{equ3-1-8}. So we should study the zeros of the algebraic equations
\begin{equation}\label{equ3-1-9}
\begin{split}
&a_{1,1,0,\boldsymbol{0}_{n-2}}
+b_{1,0,1,\boldsymbol{0}_{n-2}}+\sum_{j=3}^n(a_{1,0,\boldsymbol{e}_j}
+b_{0,1,\boldsymbol{e}_j})X_j=0,\\
&(c_{2,0,\boldsymbol{0}_{n-2},s}
+c_{0,2,\boldsymbol{0}_{n-2},s})R^2+2\sum_{3\leq j_1\leq j_2\leq n}c_{0,0,\boldsymbol{e}_{j_1j_2},s}X_{j_1}X_{j_2}\\&+2\sum_{j=3}^nc_{1,0,0,\boldsymbol{e}_j,s}X_j=0,
\end{split}
\end{equation}
for $s=3,\ldots,n$. Isolating $R$ from the equation in \eqref{equ3-1-9} for $s=3$, taking into account that $R>0$ and substituting it into the other equations of \eqref{equ3-1-9}, the numerator becomes a polynomial equation in $X_3,X_4,\ldots,X_n$ of degree 2 for $s=4,\ldots,n$. By Bezout's theorem \cite{IRS74}, the maximum number of solutions that system \eqref{equ3-1-9} can have is $2^{n-3}$. In the following, we claim that the bound can be reached.

Let $\mathcal{S}_0$ be the set of algebraic systems of the form \eqref{equ3-1-9}. We show that there exist many systems in $\mathcal{S}_0$ having exactly $2^{n-3}$ simple zeros. Consider a family of systems
\begin{align}\label{equ3-1-10-1}
&a_{1,1,0,\boldsymbol{0}_{n-2}}
+b_{1,0,1,\boldsymbol{0}_{n-2}}+(a_{1,0,\boldsymbol{e}_3}
+b_{0,1,\boldsymbol{e}_3})X_3=0, \\ \label{equ3-1-10-2}
&(c_{2,0,\boldsymbol{0}_{n-2},3}
+c_{0,2,\boldsymbol{0}_{n-2},3})R^2+2\sum_{3\leq j_1\leq j_2\leq n}c_{0,0,\boldsymbol{e}_{j_1j_2},3}X_{j_1}X_{j_2}+2\sum_{j=3}^nc_{1,0,0,\boldsymbol{e}_j,3}X_j=0,\\ \label{equ3-1-10-3}
&2\sum_{3\leq j_1\leq j_2\leq s}c_{0,0,\boldsymbol{e}_{j_1j_2},s}X_{j_1}X_{j_2}+2c_{1,0,0,\boldsymbol{e}_j,s}X_s=0,\quad s=4,\ldots,n,
\end{align}
with all the coefficients nonzero. Substituting the unique solution $X_{3,0}$ of $X_3$ in \eqref{equ3-1-10-1} into \eqref{equ3-1-10-3} with $s=4$, then the last equation has exactly two different real solutions $X_{4,1}$ and $X_{4,2}$ of $X_4$ by choosing appropriately the values of coefficients in \eqref{equ3-1-10-3}. Introducing the two solutions $(X_{3,0},X_{4,i})$, $i=1,2$, into \eqref{equ3-1-10-3} with $s=5$, and choosing appropriately the values of the resulting coefficients, we obtain two different real solutions $X_{5,i,1}$ and $X_{5,i,2}$ of $X_5$ for each $i=1,2$. Moreover, one can choose the coefficients so that the four solutions $(X_{3,0},X_{4,i},X_{5,i,j})$ for $i,j=1,2$ are distinct. By induction we can prove that for suitable choice of the coefficients equations \eqref{equ3-1-10-1} and \eqref{equ3-1-10-3} have $2^{n-3}$ different solutions $(X_3,X_4,\ldots,X_n)$. Note that $X_3=X_{3,0}$ is fixed, so for an appropriate choice of the coefficients
in \eqref{equ3-1-10-2}, the resulting equation can have a positive solution $R$ for each of the $2^{n-3}$ solutions $(X_3,X_4,\ldots,X_n)$ of \eqref{equ3-1-10-1} and \eqref{equ3-1-10-3}. Since the $2^{n-3}$ solutions are different, and $2^{n-3}$ is the maximum number that equations \eqref{equ3-1-10-1}-\eqref{equ3-1-10-3} can have by Bezout's theorem. we conclude that every solution is simple, and so the determinant of the Jacobian of the system evaluated at the solution is nonzero. This establishes the claim.

Using similar arguments, we can also choose the coefficients of the former system so that it has $\ell\in\{0,1,\ldots,2^{n-3}\}$ real simple solutions. Taking system \eqref{equ3-1-5} with $F_1$ and $F_s$ having the coefficients as in \eqref{equ3-1-10-1}-\eqref{equ3-1-10-3}, then the averaged functions in \eqref{equ3-1-8} have exactly $\ell\in\{0,1,\ldots,2^{n-3}\}$ real simple solutions with $R>0$. By the averaging theorem, we conclude that there are systems \eqref{equ-2} with a number $\ell\in\{0,1,\ldots,2^{n-3}\}$ limit cycles bifurcating from the origin. This completes the proof of Theorem \ref{main-th-1}.

\section{Demonstration of the Maple Program}\label{sect-Algo}

This section reports the performance of the function \textsc{OrderKFormula}$(k,n)$ (see Section \ref{sect3.1}) on several examples.

\begin{table}[h]
\begin{center}
\caption{Computational times (in seconds) of the function \textsc{OrderKFormula}$(k,n)$.}\label{Tab-A}
\begin{tabular}{|c|c|c||c||c||c||c|}
  \hline
  \multicolumn{7}{|c|}{$k$}  \\
  \hline \hline
  & & 1 & 2 & 3 & 4 & 5 \\
  \hline \hline
  \multirow{6}{*}{$n$}
  & 2 & 0. & 0.024 & 0.034 & 0.055 & 0.107 \\
  & 3 & 0. & 0.027 & 0.052 & 0.137 & 0.685 \\
  & 4 & 0. & 0.031 & 0.076 & 0.399 & 4.071 \\
  & 5 & 0. & 0.033 & 0.125 & 1.033 & 21.699 \\
  & 6 & 0. & 0.039 & 0.203 &  2.563 & 103.486 \\
  & 7 & 0.002 & 0.046 & 0.322 & 5.527 & 440.564 \\
  \hline
\end{tabular}
\end{center}
\end{table}

The output \textsc{OrderKFormula}$(k,2)$ for $k=1,2,3$ is:
$[y_{k,1}(t,z_{{1}},z_{{2}}),y_{k,2}(t,z_{{1}},z_{{2}})]$, where
\begin{align*}
y&_{1,1}(t,z_{{1}},z_{{2}})=\int_{0}^{t}F_{{1,1}}\left(\theta,z_{{1}},z_{{2
}}\right)d\theta,\\
y&_{1,2}(t,z_{{1}},z_{{2}})=\int_{0}^{t}F_{{1,2}}\left(
\theta,z_{{1}},z_{{2}} \right)d\theta,\\
y&_{2,1}(t,z_{{1}},z_{{2}})=\int_{0}^{t}2\,F_{{2,1}}\left(\theta,z_{{1}},z_
{{2}}\right)+2\,{\frac{\partial}{\partial z_{{1}}}}
F_{{1,1}}\left(\theta,z_{{1}},z_{{2}}\right)
\times y_{{1,1}}\left(\theta,z_{{1}},z_{{2}}\right)\\
&+2\,{\frac{\partial}{\partial z_{{2}}}}F_{{1,1}}\left(\theta,z_{{1}},z_{{2}}\right)
y_{{1,2}}\left(\theta,z_{{1}},z_{{2}}\right)d\theta,\\
y&_{2,2}(t,z_{{1}},z_{{2}})=\int_{0}^{t}2\,F_{{2,2}}\left(\theta,z_{{1}},z_{
{2}}\right)+2\,{\frac{\partial}{\partial z_{{1}}}}
F_{{1,2}}\left(\theta,z_{{1}},z_{{2}}\right)
\times y_{{1,1}}\left(\theta,z_{{1}},z_{{2}}\right)\\
&+2\,{\frac{\partial}{\partial z_{{2}}}}F_{{1,2}}
\left(\theta,z_{{1}},z_{{2}}\right)y_{{1
,2}}\left(\theta,z_{{1}},z_{{2}}\right)d\theta,\\
y&_{3,1}(t,z_{{1}},z_{{2}})=\int_{0}^{t}6\,F_{{3,1}}
\left(\theta,z_{{1}},z_{{2}}\right)+6\,{\frac {\partial }{\partial z_{{1}}}} F_{{2,1}}\left(\theta,z_{{1}},z_{{2}}\right)
\times y_{{1,1}}\left(\theta,z_{{1}},z_{{2}}\right)\\
&+6\,{\frac{\partial}{\partial z_{{2}}}}F_{{2,1}}\left(\theta,z_{{1}},z_{{2}}\right)y_{{1
,2}}\left(\theta,z_{{1}},z_{{2}}\right)
+3\,{\frac {\partial }{\partial z_{{1}}}}F_{{1,1}}\left(
\theta,z_{{1}},z_{{2}}\right)y_{{2,1}}
\left(\theta,z_{{1}},z_{{2}}\right)\\
&+3\,{\frac {\partial }{
\partial z_{{2}}}}F_{{1,1}}\left(\theta,z_{{1}},z_{
{2}}\right)
\times y_{{2,2}}\left(\theta,z_{{1}},z_{{2}}\right)+3\,
{\frac {\partial ^{2}}{\partial {z_{{1}}}^{2}}}F_{{1,1}}
\left(\theta,z_{{1}},z_{{2}}\right)y_{{1,1}}\left(\theta,z_{
{1}},z_{{2}}\right)^{2}\\
&+6\,{\frac {\partial^{2}}{
\partial z_{{1}}\partial z_{{2}}}}F_{{1,1}}\left(
\theta,z_{{1}},z_{{2}}\right)y_{{1,1}}
\left(\theta,z_{{1}},z_{{2}}\right)y_{{1,2}}
\left(\theta,z_{{1}},z_{{2}}\right)\\
&+3\,{\frac {\partial^{2}}{\partial z_{{2}}^{2}}}F_{{1,1}}
\left(\theta,z_{{1}},z_{{2}}\right)y_{{1,2}}
\left(\theta,z_{{1}},z_{{2}}\right)^{2}d\theta,\\
y&_{3,2}(t,z_{{1}},z_{{2}})=\int_{0}^{t}6\,F_{{3,2}}\left(\theta,z_{{1}},z_{
{2}}\right)+6\,{\frac {\partial }{\partial z_{{1}}}}
F_{{2,2}}\left(\theta,z_{{1}},z_{{2}}\right)
\times y_{{1,1}}\left(\theta,z_{{1}},z_{{2}}\right)\\
&+6\,{\frac {\partial}{\partial z_{{2}}}}F_{{2,2}}
\left(\theta,z_{{1}},z_{{2}}\right)y_{{1
,2}}\left(\theta,z_{{1}},z_{{2}}\right)
+3{\frac {\partial }{\partial z_{{1}}}}F_{{1,2}}\left(
\theta,z_{{1}},z_{{2}}\right)y_{{2,1}}\left(\theta,z_{{1}},z_{{2}}\right)\\
&+3\,{\frac {\partial}{\partial z_{{2}}}}F_{{1,2}}\left(\theta,z_{{1}},z_{
{2}}\right)
\times y_{{2,2}}\left(\theta,z_{{1}},z_{{2}}\right)
+3\,{\frac {\partial^{2}}{\partial z_{{1}}^{2}}}F_{{1,2}}\left( \theta,z_{{1}},z_{{2}}\right)y_{{1,1}}
\left(\theta,z_{{1}},z_{{2}}\right)^{2}\\
&+6\,{\frac {\partial^{2}}{
\partial z_{{1}}\partial z_{{2}}}}F_{{1,2}}\left(
\theta,z_{{1}},z_{{2}}\right)y_{{1,1}}
\left(\theta,z_{{1}},z_{{2}}\right)y_{{1,2}}
\left(\theta,z_{{1}},z_{{2}}\right)\\
&+3\,{\frac {\partial^{
2}}{\partial z_{{2}}^{2}}}F_{{1,2}}\left(\theta,z
_{{1}},z_{{2}}\right)y_{{1,2}}
\left( \theta,z_{{1}},z_{{2}}\right)^{2}d\theta.
\end{align*}

\section{Proof of Theorem \ref{main-2}}\label{sect-chun}
We need to write the linear part of system \eqref{equ3-7-2} at the origin in its real Jordan normal form
\begin{equation}\label{equros-1}
\begin{split}
\left(
\begin{array}{ccc}
0&-\beta&0\\
\beta&0&0\\
0&0&0\\
\end{array}
\right),
\end{split}
\end{equation}
when $\varepsilon=0$. For doing that we do the linear change of variables $(x,y,z)\rightarrow(u,v,w)$ given by
\begin{equation}\label{equros-1-1}
\begin{split}
x=\frac{v}{\beta}+w,\quad
y=u,\quad z=-\beta v.\nonumber
\end{split}
\end{equation}
In these new variables $(u,v,w)$, system \eqref{equ3-7-2} becomes a new system which can be written as $(\dot{u},\dot{v},\dot{w})$. Computing the third-order Taylor expansion of expressions in this new system, with respective to $\varepsilon$, about the point $\varepsilon=0$, we obtain
\begin{equation}\label{equros-2}
\begin{split}
\dot{u}&=-\beta v,\\
\dot{v}&=\beta u+3\,\frac {{v}^{2}w}{{\beta}^{3}}+\frac {{v}^{3}}{{\beta}^{4}}-
\frac{{u}^{2}v}{{\beta}^{2}}+3\,\frac {v{w}^{2}}{{\beta}^{2}}-
\frac{{u}^{2}w}{\beta}\\
&\quad+\frac{{w}^{3}}{\beta}+\varepsilon\Big(\Big(-a_{{1}}+\frac{b_{{1}}}{{\beta}^{2}}\Big)v-\frac {uc_{{1}}}{\beta}+\frac{wb_{{1}}}{\beta}\Big)\\
&\quad+\varepsilon^2\Big(\Big(\frac {b_{{2}}}{{\beta}^{2}}-a_{{2}}\Big)v-\frac {uc_{{2}}}{\beta}+\frac{wb_{{2}}}{\beta}\Big),\\
\dot{w}&=-\frac{{v}^{3}}{{\beta}^{5}}+\frac{{u}^{2}v}{{\beta}^{3}}-
\frac{{w}^{3}}{{\beta}^{2}}-3\frac {v{w}^{2}}{{\beta}^{3}}+
\frac{{u}^{2}w}{{\beta}^{2}}-3\frac {{v}^{2}w}{{\beta}^{4}}\\
&\quad+\varepsilon\Big(\Big(\frac {a_{{1}}}{\beta}-\frac {b_{{1}}}{{\beta}^{3}}
\Big)v+\frac {uc_{{1}}}{{\beta}^{2}}-\frac {wb_{{1}}}{{\beta}^{2}}\Big)\\
&\quad+\varepsilon^2\Big(\Big(-\frac{b_{{2}}}{{\beta}^{3}}+\frac
{a_{{2}}}{\beta}\Big)v+\frac {uc_{{2}}}{{\beta}^{2}}-\frac{wb_{{2}}}{{\beta}^{
2}}\Big).
\end{split}
\end{equation}
Note that system \eqref{equros-2} is in the form of \eqref{equ-2}. Now using our Maple program to system \eqref{equros-2}, we obtain the first-order averaged functions:
\begin{equation}\label{equros-3}
\begin{split}
f_{1,1}(R,X_3)=-\frac{\pi R\big(\beta^2a_1-b_1\big)}{\beta^3},\quad
f_{1,3}(R,X_3)=-\frac{2b_1\pi X_3}{\beta^3}.\nonumber
\end{split}
\end{equation}
It obvious that this algebraic system has no real isolated solution with $R>0$. Hence, the first-order averaging does not provide any information about the limit cycles that bifurcate from the origin of system \eqref{equ3-7-2}.

We pass then to the second-order averaging, assuming $(f_{1,1}(R,X_3),f_{1,3}(R,X_3))=(0,0)$. This makes $a_1=b_1=0$. Updating the obtained normal form and computing the second-order averaged functions, we have
\begin{equation}\label{equros-6}
\begin{split}
f_{2,1}(R,X_3)&=-\frac{\pi R}{4\beta^5}\bar{f}_{2,1}(R,X_3),\\
f_{2,3}(R,X_3)&=\frac{\pi}{\beta^5}\bar{f}_{2,3}(R,X_3),
\end{split}
\end{equation}
where
\begin{equation}\label{equros-6-1}
\begin{split}
\bar{f}_{2,1}&(R,X_3)=\left(\beta^{2}-3\right)\rho+4\beta^{4}a_{{2}}
-12\beta^{2}X_{{3}}^{2}-4\beta^{2}b_{{2}},\\
\bar{f}_{2,3}&(R,X_3)=X_3\left(\left(\beta^{2}-3\right)\rho-2\beta^{2}X_{{3}}^{2}
-2\beta^{2}b_{{2}}\right),\nonumber
\end{split}
\end{equation}
with $\rho=R^2$. A direct computation shows that the BKK bound of the polynomial system $\{\bar{f}_{2,1}(R,X_3),\bar{f}_{2,2}(R,X_3)\}$ is 3. So system \eqref{equros-6} can have at most 3 real solutions with $\rho>0$. As a result, system \eqref{equ3-7-2} can have at most 3 limit cycles bifurcating from the origin. In the following we compute a partition of the parametric space such that, inside every open cell of the partition, the system can have 3 limit cycles.

The determinate of the Jacobian of $(f_{2,1}(R,X_3),f_{2,3}(R,X_3))$ is:
\begin{equation}\label{equros-7}
\begin{split}
D_2(R,X_3)=\mbox{det}\left(
 \begin{matrix}
   \frac{\partial f_{2,1}}{\partial R} & \frac{\partial f_{2,1}}{\partial X_3}\\
   \frac{\partial f_{2,3}}{\partial R} & \frac{\partial f_{2,3}}{\partial X_3}
  \end{matrix}
  \right)
  =-\frac{\pi^2}{4\beta^{10}}\cdot\bar{D}_2(R,X_3),\nonumber
\end{split}
\end{equation}
where
\begin{equation}\label{equros-7-1}
\begin{split}
\bar{D}_2&(R,X_3)=\left(3\beta^{4}-18\beta^{2}+27\right)\rho^{2}
+\left(18\beta^{4}-54\beta^{2}\right)\rho X_{{3}}^{2}\\
&\quad+\left(4\beta^{6}a_{{2}}-12\beta^{4}a_{{2}}-10\beta^{4}b_{{2}}+30
\beta^{2}b_{{2}}\right)\rho+72\beta^{4}X_{{3}}^{4}\\
&\quad+\left(-24\beta^{6}a_{{2}}+48\beta^{4}b_{{2}}\right)X_{{3}}^{2}
-8\beta^{6}a_{{2}}b_{{2}}+8\beta^{4}b_{{2}}^{2},\nonumber
\end{split}
\end{equation}
with $\rho=R^2$.

By Theorem \ref{semi-averaging}, we know that system \eqref{equ3-7-2} can have 3 limit cycles bifurcating from the origin if the following semi-algebraic system
\begin{equation}\label{equros-8}
\begin{split}
\left\{
\begin{array}{ll}
&\bar{f}_{2,1}(R,X_3)=\bar{f}_{2,3}(R,X_3)=0, \\
&\rho>0,\quad \bar{D}_2(R,X_3)\neq0,\quad \beta\neq0
\end{array}
\right.
\end{split}
\end{equation}
has exactly 3 real solutions with respective to
the variables $R$, $X_3$. Using {\sf{DISCOVERER}} (or the package RegularChains[SemiAlgebraicSetTools] in Maple), we obtain system \eqref{equros-8} has exactly 3 real solutions if and only if the condition $\mathcal{C}_0$ or the condition $\mathcal{C}_1$ holds (see \eqref{equ3-7-3}).

This completes the proof of Theorem \ref{main-2}.

\section{Other Models and Remarks}\label{sect-ex}

\subsection{A Class of Generalized Lorenz Systems}\label{sect4.2-1}

Consider the following integrable deformation of Lorenz system:
\begin{equation}\label{equ4-2-1}
\begin{split}
\dot{x}&=a(y-x)+dy(z-c),\\
\dot{y}&=cx-xz-y,\\
\dot{z}&=-bz+xy+sx,
\end{split}
\end{equation}
where $a$, $b$, $c$, $d$, $s$ are real parameters. System \eqref{equ4-2-1} is obtained by choosing the deformation functions $\alpha=\frac{1}{2}dy^2$ and $\beta=sy$ (see \cite[Section 3.1]{crla2018}). We notice that if $d=s=0$, then this system is the so-called Lorenz System. It is easy to check that the origin is a zero-Hopf equilibrium when $a=-1$, $b=0$ and $c^2d+c-1>0$. Now consider the vector $(a,b,c,d,s)$ given by
\begin{equation}\label{equ4-2-2}
\begin{split}
a&=-1+\varepsilon a_1+\varepsilon^2 a_2+\varepsilon^3 a_3,\quad
b=\varepsilon b_1+\varepsilon^2 b_2+\varepsilon^3 b_3,\\
c&=c_0+\varepsilon c_1+\varepsilon^2 c_2+\varepsilon^3 c_3,\quad
s=s_0+\varepsilon s_1+\varepsilon^2 s_2+\varepsilon^3 s_3,\\
d&=\frac{\beta^2-c_0+1}{c_0^2}+\varepsilon d_1+\varepsilon^2 d_2+\varepsilon^3 d_3,\quad \beta>0,\nonumber
\end{split}
\end{equation}
where the constants $a_i$, $b_i$, $c_i$, $d_i$ and $s_i$ are all real parameters with $c_0\neq0$. Applying the third-order averaging method to system \eqref{equ4-2-1}, we have the following result.
\begin{theorem}\label{main-3}
The following statements hold for $|\varepsilon|>0$ sufficiently small.
\begin{itemize}
\item [(i)] The first-order averaging does not provide any
information about limit cycles that bifurcate from the origin.
\item [(ii)] System \eqref{equ4-2-1} has, up to the second-order averaging, at most 1 limit cycle bifurcating from the origin, and this number can be reached if one of the following 2 conditions holds:
\begin{equation}\label{equ4-2-3}
\begin{split}
\mathcal{C}_2&=[a_2<0,\,2\beta^2+2-c_0<0]\wedge[\beta>0],\\ \mathcal{C}_3&=[0<a_2,\,0<2\beta^2+2-c_0]\wedge[\beta>0,c_0\neq0].
\end{split}
\end{equation}
\item [(iii)] System \eqref{equ4-2-1} has, up to the third-order averaging, at most 3 limit cycles bifurcating from the origin, and this number can be reached if we take the condition $\mathcal{C}^*=[c_1=d_1=1,d_2=s_2=2]$ and the sample points of $(b_3,a_3,\delta)$ listed in Table \ref{Tab-2}, where $\delta=\sqrt{\beta^2+1}$.
\end{itemize}
\end{theorem}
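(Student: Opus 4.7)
The plan is to mimic the three-step pipeline already used in the proof of Theorem \ref{main-2}: bring the linear part of \eqref{equ4-2-1} into real Jordan normal form at $\varepsilon=0$, apply the change of variables of Lemma \ref{lem-main-1} to arrive at the standard form of averaging, and then invoke the Maple program together with Theorem \ref{semi-averaging} at each averaging order.

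First I would compute the Jacobian $J$ of \eqref{equ4-2-1} at the origin with $a=-1$, $b=0$, $c=c_0$, $d=(\beta^2-c_0+1)/c_0^2$. A direct calculation shows that its spectrum is $\{\pm i\beta,0\}$, so a real linear change of coordinates $(x,y,z)\mapsto(u,v,w)$ puts $J$ into the form \eqref{equros-1}. Substituting the $\varepsilon$-expansion of $(a,b,c,d,s)$ and Taylor-expanding about $\varepsilon=0$ up to order three casts the system in the perturbation shape \eqref{equ-2}, ready for the Maple routine \textsc{AveragedFunctions}.

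For statement (i), I would call \textsc{AveragedFunctions} with $k=1$; by the general structure described in Appendix \ref{A-B} the first-order averaged function $\boldsymbol{f}_1(R,X_3)$ will be a simple rational function in $R$ and $X_3$. One then checks by inspection (as in \eqref{equros-3} of the jerk case) that the only way for $\boldsymbol{f}_1$ to vanish identically is to set certain parameters (among $a_1,b_1,c_1,d_1,s_1$) so that the system still admits a continuum of zeros rather than an isolated one; this is what prevents the first-order averaging from detecting any limit cycle.

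For statement (ii), I would impose $\boldsymbol{f}_1\equiv 0$, update the standard form accordingly and call \textsc{AveragedFunctions} with $k=2$. By Theorem \ref{semi-averaging}, the number of limit cycles is given by the number of real solutions of
\begin{equation*}
\bar f_{2,1}(R,X_3,\boldsymbol\mu)=\bar f_{2,3}(R,X_3,\boldsymbol\mu)=0,\qquad R>0,\;\bar D_2\neq 0,\;\beta>0.
\end{equation*}
A BKK / mixed-volume computation on the pair $\{\bar f_{2,1},\bar f_{2,3}\}$ should give the upper bound $1$. To show that $1$ is reached, I would feed the semi-algebraic system above to \textsc{DISCOVERER} (or the \texttt{RegularChains[SemiAlgebraicSetTools]} package), applying the five-step procedure of Section \ref{sect3.2}: triangularize, perform real-solution classification, take rational sample points from each open cell, and read off the cells where the number of real solutions equals $1$. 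The output conditions should simplify precisely to $\mathcal{C}_2\vee\mathcal{C}_3$ in \eqref{equ4-2-3}.

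Statement (iii) is the main obstacle and will require the heaviest symbolic work. I would further impose $\boldsymbol f_2\equiv 0$, which forces additional parametric relations (this is the step where the substitution $\mathcal{C}^*=[c_1=d_1=1,\,d_2=s_2=2]$ appears naturally as a convenient slice making $\boldsymbol f_2$ vanish while keeping $\boldsymbol f_3$ generic), update the standard form once more and call \textsc{AveragedFunctions} with $k=3$. The resulting polynomial system $\{\bar f_{3,1},\bar f_{3,3}\}$ in $(R,X_3)$ is expected to have BKK bound $3$, matching $H_2(3,N)=3^{n-2}=3$ predicted by Remark \ref{rem-bkk}. The hard part will then be the semi-algebraic analysis: the remaining parameters $(b_3,a_3,\beta)$ give a three-dimensional parametric space, so Yang--Xia real solution classification produces a discriminant variety of substantial size. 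Rather than describing all cells exhaustively, I would exhibit sample points of $(b_3,a_3,\delta)$ (with $\delta=\sqrt{\beta^2+1}$) lying in the cells where the system has exactly three real solutions with $R>0$ and $\bar D_3\neq 0$; these are the entries of Table \ref{Tab-2}. Since for each sample point the numerical isolation certifies three simple real zeros of $(\boldsymbol f_3)$, Theorem \ref{averaging-thm} then produces three limit cycles bifurcating from the origin, matching the BKK upper bound and completing the proof.
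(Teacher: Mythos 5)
Your proposal follows essentially the same pipeline the paper intends for this theorem (it explicitly defers to ``similar calculations and arguments to the proof of Theorem~\ref{main-2}''): real Jordan normal form at $\varepsilon=0$, the change of variables of Lemma~\ref{lem-main-1}, the Maple routine for the averaged functions order by order under $\boldsymbol{f}_1\equiv\cdots\equiv\boldsymbol{f}_{k-1}\equiv 0$, and Theorem~\ref{semi-averaging} with \textsf{DISCOVERER}/\textsf{RegularChains} at orders two and three, with sample points replacing a full cell decomposition at third order. One small correction: by Remark~\ref{remmark-Lorenz}, the slice $\mathcal{C}^*=[c_1=d_1=1,\,d_2=s_2=2]$ is imposed only to cut the $7$-parameter third-order semi-algebraic system down to the three parameters $(b_3,a_3,\delta)$ so that \textit{RealRootClassification} becomes tractable; it is not the condition that forces $\boldsymbol{f}_2\equiv 0$.
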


\begin{remark}\label{remmark-Lorenz}
Remark that, when applying the third-order averaging to system \eqref{equ4-2-1}, we found that the resulting semi-algebraic system \eqref{equ3-0-4} contains 7 parameters that our algorithmic approach cannot work effectively (the computation in Maple was consuming too much of the CPU). On the other hand, the information provided by sample points of the parameter space may often be sufficient in practice. In general, the selection of sample points might be extremely complicated but could be automated using, e.g., the Maple command \textit{RealRootClassification} (with the option \textit{'output'='samples'}) to the semi-algebraic system. In order to obtain sample points for system \eqref{equ4-2-1} to have 3 real solutions, we restrict parameter condition $\mathcal{C}^*=[c_1=d_1=1,d_2=s_2=2]$. The command \textit{RealRootClassification} permits us to select 1750 sample points from the resulting semi-algebraic system, we show only 60 sample points in Table \ref{Tab-2} due to the space limitation.
\end{remark}

\subsection{A 4D Hyperchaotic System}\label{sect4.3-1}

Recently, the following new hyperchaotic system is
proposed in \cite{zyxc2019}:
\begin{equation}\label{equ4-3-1}
\begin{split}
\dot{x}&=a_1(y-x)-w,\\
\dot{y}&=a_2x-xz-y,\\
\dot{z}&=xy-a_3z,\\
\dot{w}&=a_4xz-a_5w,
\end{split}
\end{equation}
where $a_i$'s are real parameters. One can verify that the origin is a (complete) zero-Hopf equilibrium when $a_1=-1$, $a_3=a_5=0$ and $a_2-1>0$. Consider the vector $(a_1,a_2,a_3,a_4,a_5)$ given by
\begin{equation}\label{equ4-3-2}
\begin{split}
a_1&=-1+\varepsilon a_{1,1},\quad
a_2=1+\beta^2+\varepsilon a_{2,1},\\
a_3&=\varepsilon a_{3,1},\quad
a_4=a_{4,0}+\varepsilon a_{4,1},\\
a_5&=\varepsilon a_{5,1},\quad \beta\neq0,\nonumber
\end{split}
\end{equation}
where the constants $a_{i,j}$ are all real parameters. The following result provides sufficient conditions for the bifurcation of a limit cycle from the origin.

\begin{theorem}\label{main-4}
For $|\varepsilon|>0$ sufficiently small, using the first-order averaging method, we obtain that, system \eqref{equ4-3-1} has exactly 1 limit cycle bifurcating from the origin if one of the following 4 conditions holds:
\begin{equation}\label{equ4-3-3}
\begin{split}
\mathcal{C}_4&=[a_{4,0}<0,\,a_{3,1}<0,\,0<a_{1,1}]\wedge\tilde{\mathcal{C}},\\ \mathcal{C}_5&=[a_{4,0}<0,\,0<a_{3,1},\,a_{1,1}<0]\wedge\tilde{\mathcal{C}},\\
\mathcal{C}_6&=[0<a_{4,0},\,a_{3,1}<0,\,a_{1,1}<0]\wedge\tilde{\mathcal{C}},\\ \mathcal{C}_7&=[0<a_{4,0},\,0<a_{3,1},\,0<a_{1,1}]\wedge\tilde{\mathcal{C}},\\
\end{split}
\end{equation}
where $\tilde{\mathcal{C}}=[\beta\neq0,a_{4,0}\neq0,a_{3,1}\neq0,a_{1,1}\neq0,a_{1,1}+a_{5,1}\neq0]$ is of type border polynomial.
\end{theorem}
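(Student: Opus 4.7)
The plan is to mirror the scheme used to prove Theorem \ref{main-2} in Appendix \ref{sect-chun}. First, put the linear part at the origin into real Jordan normal form so that Lemma \ref{lem-main-1} applies; second, invoke the Maple program of Section \ref{sect3.1} to compute the first-order averaged function $\boldsymbol{f}_1$; third, apply Theorem \ref{semi-averaging} together with {\sf DISCOVERER} (or \texttt{RegularChains[SemiAlgebraicSetTools]}) to translate the existence of exactly one positive-$R$ simple zero of $\boldsymbol{f}_1$ into the sign conditions defining the four cells $\mathcal{C}_4,\ldots,\mathcal{C}_7$.

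At $\varepsilon=0$ the Jacobian of \eqref{equ4-3-1} at the origin has characteristic polynomial $\lambda^{2}(\lambda^{2}+\beta^{2})$, confirming a complete zero-Hopf equilibrium. I would write down an explicit invertible linear change of coordinates $(x,y,z,w)\mapsto(u,v,U_3,U_4)$, with entries rational in $\beta$, that puts this Jacobian in the real Jordan form required by the lemma (rotation block with eigenvalues $\pm i\beta$ plus a zero $2\times 2$ block). The rescaling $(u,v,U_3,U_4)=\varepsilon(\tilde U,\tilde V, X_3, X_4)$, the polar substitution $\tilde U=R\cos\theta$, $\tilde V=R\sin\theta$, and taking $\theta$ as the new independent variable then produce the standard averaging form \eqref{equ2-1}; running \textsc{AveragedFunctions}$(\textit{SF},1)$ yields $\boldsymbol{f}_1=(f_{1,1},f_{1,3},f_{1,4})$, each a rational function of $(R,X_3,X_4)$ whose denominator, by the remark following Lemma \ref{lem-main-1}, is a monomial in $R$.

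By Theorem \ref{semi-averaging} with $k=1$ the problem is now equivalent to counting the real solutions of
\begin{equation*}
\bar f_{1,1}=\bar f_{1,3}=\bar f_{1,4}=0,\qquad R>0,\qquad \bar D_1\neq 0,\qquad \beta\neq 0,
\end{equation*}
where $\bar D_1$ is the numerator of $\det J_{\boldsymbol{f}_1}$. I would triangularize $\{\bar f_{1,1},\bar f_{1,3},\bar f_{1,4}\}$ with regular chains, solving for $R$ (or $R^{2}$) from an equation linear in the highest power of $R$ and substituting into the remaining equations to reduce the system to a residual relation in $(X_3,X_4)$; the resulting solutions are then intersected with $R>0$ and $\bar D_1\neq 0$.

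Finally I would feed this semi-algebraic system to the real-solution classifier, asking for the defining polynomials of the discriminant variety and one sample point per generic cell. I expect the classifier to return a border polynomial whose irreducible factors are (up to units) $\beta$, $a_{4,0}$, $a_{3,1}$, $a_{1,1}$ and $a_{1,1}+a_{5,1}$; the eight possible sign patterns of $a_{4,0}, a_{3,1}, a_{1,1}$ should then collapse, after imposing $R>0$, to precisely the four cells $\mathcal{C}_4,\ldots,\mathcal{C}_7$, the remaining four patterns giving no positive-$R$ solution. The main obstacle will be verifying this collapse rigorously and confirming that the factor $a_{1,1}+a_{5,1}$ arises solely from the Jacobian-determinant condition $\bar D_1\neq 0$ rather than from the equations themselves, so that it correctly appears inside the border-polynomial condition $\tilde{\mathcal{C}}$ instead of splitting the four cells further.
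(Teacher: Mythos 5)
Your proposal follows exactly the route the paper prescribes for this theorem (the paper itself omits the details, stating only that Theorem \ref{main-4} is proved ``by similar calculations and arguments to the proof of Theorem \ref{main-2}''): real Jordan normal form for the linear part with characteristic polynomial $\lambda^2(\lambda^2+\beta^2)$, computation of the first-order averaged functions via the Maple program and Lemma \ref{lem-main-1}, and then real-solution classification of the semi-algebraic system from Theorem \ref{semi-averaging} to isolate the four sign cells $\mathcal{C}_4,\ldots,\mathcal{C}_7$ (precisely those with $a_{1,1}a_{3,1}a_{4,0}>0$) together with the border-polynomial condition $\tilde{\mathcal{C}}$. This matches the paper's intended argument; the only thing separating your plan from a complete proof is actually executing the symbolic computations, which is likewise left implicit in the paper.
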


Theorem \ref{main-3} and Theorem \ref{main-4} can be proved by using similar calculations and arguments to the proof of Theorem \ref{main-2}. The details of their proof are omitted here.

\renewcommand\arraystretch{1.5}
\begin{sidewaystable}
\begin{center}
\caption{Selected sample points of $(b_3,a_3,\delta)$ for system \eqref{equ4-2-1} to have 3 limit cycles.}\label{Tab-2}
\begin{tabular}{l|l|l}
\toprule[1pt]
  \multicolumn{3}{c}{60 sample points of $(b_3,a_3,\delta)$ with $\delta=\sqrt{\beta^2+1}$} \\ \midrule[1pt]
$b_{{3}}={\frac {3}{512}},a_{{3}}=-{\frac {461}{16384}},\delta={\frac
{17477}{16384}}$ & $b_{{3}}={\frac {3}{512}},a_{{3}}=-{\frac {117}{8192}},\delta={\frac {
34903}{32768}}
$ & $b_{{3}}={\frac {3}{512}},a_{{3}}=-{\frac {117}{8192}},\delta={\frac {
17465}{16384}}$ \\
  \hline
 $b_{{3}}={\frac {3}{512}},a_{{3}}=-{\frac {159}{16384}},\delta={\frac
{8709}{8192}}
$ & $b_{{3}}={\frac {3}{512}},a_{{3}}=-{\frac {159}{16384}},\delta={\frac
{8725}{8192}}
$  & $b_{{3}}={\frac {3}{512}},a_{{3}}=-{\frac {159}{16384}},\delta={\frac
{8733}{8192}}$ \\
  \hline
 $b_{{3}}={\frac {3}{512}},a_{{3}}={\frac {127}{32768}},\delta={\frac {
8709}{8192}}
$ &  $b_{{3}}={\frac {3}{512}},a_{{3}}={\frac {127}{32768}},\delta={\frac {
8725}{8192}}
$ &  $b_{{3}}={\frac {3}{512}},a_{{3}}={\frac {127}{32768}},\delta={\frac {
8733}{8192}}
$ \\
  \hline
 $b_{{3}}={\frac {3}{512}},a_{{3}}={\frac {69}{8192}},\delta={\frac {
34903}{32768}}
$ & $b_{{3}}={\frac {3}{512}},a_{{3}}={\frac {69}{8192}},\delta={\frac {
17465}{16384}}
$  & $b_{{3}}={\frac {3}{512}},a_{{3}}={\frac {365}{16384}},\delta={\frac {
17477}{16384}}
$ \\
 \hline
 $b_{{3}}={\frac {1739}{32768}},a_{{3}}=-{\frac {795}{4096}},\delta={
\frac {34939}{32768}}
$ &  $b_{{3}}={\frac {1739}{32768}},a_{{3}}=-{\frac {281}{2048}},\delta={
\frac {8729}{8192}}
$ & $b_{{3}}={\frac {1739}{32768}},a_{{3}}=-{\frac {281}{2048}},\delta={
\frac {8733}{8192}}
$ \\

 \hline
 $b_{{3}}={\frac {1739}{32768}},a_{{3}}=-{\frac {345}{4096}},\delta={
\frac {1087}{1024}}
$ &  $b_{{3}}={\frac {1739}{32768}},a_{{3}}=-{\frac {345}{4096}},\delta={
\frac {8725}{8192}}
$ & $b_{{3}}={\frac {1739}{32768}},a_{{3}}=-{\frac {345}{4096}},\delta={
\frac {8733}{8192}}
$ \\

 \hline
 $b_{{3}}={\frac {1739}{32768}},a_{{3}}={\frac {127}{4096}},\delta={
\frac {1087}{1024}}
$ &  $b_{{3}}={\frac {1739}{32768}},a_{{3}}={\frac {127}{4096}},\delta={
\frac {8725}{8192}}
$ & $b_{{3}}={\frac {1739}{32768}},a_{{3}}={\frac {127}{4096}},\delta={
\frac {8733}{8192}}
$ \\

 \hline
 $b_{{3}}={\frac {1739}{32768}},a_{{3}}={\frac {43}{512}},\delta={
\frac {8729}{8192}}
$ &  $b_{{3}}={\frac {1739}{32768}},a_{{3}}={\frac {43}{512}},\delta={
\frac {8733}{8192}}
$ & $b_{{3}}={\frac {1739}{32768}},a_{{3}}={\frac {9}{64}},\delta={\frac {
34939}{32768}}
$ \\

 \hline
 $b_{{3}}={\frac {51843751309}{549755813888}},a_{{3}}=-{\frac {1429}{
4096}},\delta={\frac {279481}{262144}}
$ &  $b_{{3}}={\frac {51843751309}{549755813888}},a_{{3}}=-{\frac {275}{
1024}},\delta={\frac {279443}{262144}}
$ & $b_{{3}}={\frac {51843751309}{549755813888}},a_{{3}}=-{\frac {275}{
1024}},\delta={\frac {34933}{32768}}
$ \\

  \hline
 $b_{{3}}={\frac {51843751309}{549755813888}},a_{{3}}=-{\frac {313}{
2048}},\delta={\frac {2175}{2048}}
$ &  $b_{{3}}={\frac {51843751309}{549755813888}},a_{{3}}=-{\frac {313}{
2048}},\delta={\frac {17871791}{16777216}}
$ & $b_{{3}}={\frac {51843751309}{549755813888}},a_{{3}}=-{\frac {313}{
2048}},\delta={\frac {17885025}{16777216}}
$ \\

 \hline
 $b_{{3}}={\frac {51843751309}{549755813888}},a_{{3}}=-{\frac {207}{
2048}},\delta={\frac {8565}{8192}}
$ &  $b_{{3}}={\frac {51843751309}{549755813888}},a_{{3}}=-{\frac {207}{
2048}},\delta={\frac {4575160993}{4294967296}}
$ & $b_{{3}}={\frac {51843751309}{549755813888}},a_{{3}}=-{\frac {195}{
2048}},\delta={\frac {542193}{524288}}
$ \\

  \hline
 $b_{{3}}={\frac {51843751309}{549755813888}},a_{{3}}=-{\frac {195}{
2048}},\delta={\frac {285947481}{268435456}}
$ &  $b_{{3}}={\frac {51843751309}{549755813888}},a_{{3}}=-{\frac {195}{
2048}},\delta={\frac {4578528035}{4294967296}}
$ & $b_{{3}}={\frac {51843751309}{549755813888}},a_{{3}}={\frac {71}{65536
}},\delta={\frac {271227}{262144}}
$ \\

  \hline
 $b_{{3}}={\frac {51843751309}{549755813888}},a_{{3}}={\frac {71}{65536
}},\delta={\frac {9150319465}{8589934592}}
$ &  $b_{{3}}={\frac {51843751309}{549755813888}},a_{{3}}={\frac {71}{65536
}},\delta={\frac {9157056213}{8589934592}}
$ & $b_{{3}}={\frac {51843751309}{549755813888}},a_{{3}}={\frac {115}{
16384}},\delta={\frac {1071}{1024}}
$ \\

  \hline
 $b_{{3}}={\frac {51843751309}{549755813888}},a_{{3}}={\frac {115}{
16384}},\delta={\frac {4575161051}{4294967296}}
$ &  $b_{{3}}={\frac {51843751309}{549755813888}},a_{{3}}={\frac {477}{8192
}},\delta={\frac {2175}{2048}}
$ & $b_{{3}}={\frac {51843751309}{549755813888}},a_{{3}}={\frac {477}{8192
}},\delta={\frac {17871791}{16777216}}
$ \\

  \hline
 $b_{{3}}={\frac {51843751309}{549755813888}},a_{{3}}={\frac {477}{8192
}},\delta={\frac {17885025}{16777216}}
$ &  $b_{{3}}={\frac {51843751309}{549755813888}},a_{{3}}={\frac {89}{512}}
,\delta={\frac {279443}{262144}}
$ & $b_{{3}}={\frac {51843751309}{549755813888}},a_{{3}}={\frac {89}{512}}
,\delta={\frac {558927}{524288}}
$ \\

\hline
 $b_{{3}}={\frac {51843751309}{549755813888}},a_{{3}}={\frac {65}{256}}
,\delta={\frac {279481}{262144}}$ &  $b_{{3}}={\frac {25922055879}{274877906944}},a_{{3}}=-{\frac {1429}{
4096}},\delta={\frac {279481}{262144}}$ & $b_{{3}}={\frac {25922055879}{274877906944}},a_{{3}}=-{\frac {275}{
1024}},\delta={\frac {279443}{262144}}$ \\

\hline
 $b_{{3}}={\frac {25922055879}{274877906944}},a_{{3}}=-{\frac {275}{
1024}},\delta={\frac {34933}{32768}}
$ &  $b_{{3}}={\frac {25922055879}{274877906944}},a_{{3}}=-{\frac {5}{32}},
\delta={\frac {2175}{2048}}
$ & $b_{{3}}={\frac {25922055879}{274877906944}},a_{{3}}=-{\frac {5}{32}},
\delta={\frac {4467949}{4194304}}$ \\

\hline
 $b_{{3}}={\frac {25922055879}{274877906944}},a_{{3}}=-{\frac {5}{32}},
\delta={\frac {8942519}{8388608}}$ &  $b_{{3}}={\frac {25922055879}{274877906944}},a_{{3}}=-{\frac {53}{512}
},\delta={\frac {2147}{2048}}$ & $b_{{3}}={\frac {25922055879}{274877906944}},a_{{3}}=-{\frac {53}{512}
},\delta={\frac {142973799}{134217728}}$ \\

\hline
 $b_{{3}}={\frac {25922055879}{274877906944}},a_{{3}}={\frac {77}{8192}
},\delta={\frac {8589}{8192}}$ &  $b_{{3}}={\frac {25922055879}{274877906944}},a_{{3}}={\frac {77}{8192}
},\delta={\frac {285947601}{268435456}}$ & $b_{{3}}={\frac {25922055879}{274877906944}},a_{{3}}={\frac {253}{4096
}},\delta={\frac {2175}{2048}}$ \\

\hline
 $b_{{3}}={\frac {25922055879}{274877906944}},a_{{3}}={\frac {253}{4096
}},\delta={\frac {17871797}{16777216}}$ &  $b_{{3}}={\frac {25922055879}{274877906944}},a_{{3}}={\frac {253}{4096
}},\delta={\frac {17885037}{16777216}}$ & $b_{{3}}={\frac {25922055879}{274877906944}},a_{{3}}={\frac {89}{512}}
,\delta={\frac {279443}{262144}}$ \\
\bottomrule[1pt]
\end{tabular}
\end{center}
\end{sidewaystable}

\end{document}